\documentclass[letterpaper,11pt]{article}

\usepackage{makeidx}  

\usepackage{amsthm}
\usepackage[margin=1in,dvips]{geometry}

\usepackage{graphicx}
\usepackage{amssymb}
\usepackage{amsmath}
\usepackage{float}
\usepackage{enumerate}
\usepackage{url}
\usepackage{subfig}
\usepackage[hidelinks]{hyperref}

\newtheorem{theorem}{Theorem}[section]
\newtheorem{lemma}[theorem]{Lemma}
\newtheorem{corollary}[theorem]{Corollary}

\newtheorem{observation}[theorem]{Observation}

\newcommand{\tail}[2]{#1_{\overline{[#2]}}}
\newcommand{\abs}[1]{\left|#1\right|}

\newcommand{\norm}[2]{\lVert#2\rVert_{#1}}

\newcommand{\wh}{\widehat}

\newcommand{\PE}{E_p}
\newcommand{\KE}{E_k}

%

\makeatletter
\newcommand{\define}[3]{%
  \@namedef{thm@#1}{#3}%
  \@namedef{thmtypen@#1}{lemma}%
  \newtheorem{thmtype@#1}[theorem]{#2}%
  \newtheorem*{thmtypealt@#1}{#2~\ref{#1}}%
}
\newcommand{\declare}[1]{%
  \begin{thmtype@#1} \label{#1}
    \@nameuse{thm@#1}
  \end{thmtype@#1}
}
\newcommand{\state}[1]{%
  \begin{thmtypealt@#1}
    \@nameuse{thm@#1}
  \end{thmtypealt@#1}
}
\makeatother

{\makeatletter
 \gdef\xxxmark{%
   \expandafter\ifx\csname @mpargs\endcsname\relax 
     \expandafter\ifx\csname @captype\endcsname\relax 
       \marginpar{xxx}
     \else
       xxx 
     \fi
   \else
     xxx 
   \fi}
 \gdef\xxx{\@ifnextchar[\xxx@lab\xxx@nolab}
 \long\gdef\xxx@lab[#1]#2{{\bf [\xxxmark #2 ---{\sc #1}]}}
 \long\gdef\xxx@nolab#1{{\bf [\xxxmark #1]}}
 \long\gdef\xxx@lab[#1]#2{}\long\gdef\xxx@nolab#1{}%
}

\DeclareMathOperator*{\mean}{mean}
\DeclareMathOperator{\supp}{supp}
\DeclareMathOperator*{\median}{median}
\DeclareMathOperator*{\E}{\mathbb{E}}
\DeclareMathOperator*{\Var}{Var}
\DeclareMathOperator*{\Cov}{Cov}
\def\R{\mathbb{R}}

\def\F{\mathcal{F}}
\def\Thetat{\widetilde{\Theta}}
\def\eps{\epsilon}

%
%
\makeatletter 
\newenvironment{proof*}[1][\proofname]{\par 
  \normalfont \partopsep=\z@skip \topsep=\z@skip 
  \trivlist 
  \item[\hskip\labelsep 
        \itshape 
    #1\@addpunct{.}]\ignorespaces 
}{%
  \endtrivlist\@endpefalse 
} 
\makeatother 

\pagestyle{plain}
\begin{document}

\begin{titlepage}
\title{Improved Concentration Bounds for
    Count-Sketch\thanks{This work began when both authors were funded by
      internships at Microsoft Research.  GM received further support
      from Hertz Foundation and National Science Foundation Fellowships,
      and EP received further support from a Simons Fellowship.}}

\date{}
\author{Gregory T. Minton\\MIT \and Eric Price\\MIT}
\maketitle

\begin{abstract}
  We present a refined analysis of the classic Count-Sketch streaming
  heavy hitters algorithm~\cite{CCF02}.  Count-Sketch uses $O(k \log
  n)$ linear measurements of a vector $x \in \R^n$ to give an estimate
  $\wh{x}$ of $x$.  The standard analysis shows that this estimate
  $\wh{x}$ satisfies $\norm{\infty}{\wh{x}-x}^2 <
  \norm{2}{\tail{x}{k}}^2 / k$, where $\tail{x}{k}$ is the vector
  containing all but the largest $k$ coordinates of $x$.  Our main
  result is that most of the coordinates of $\wh{x}$ have
  substantially less error than this upper bound; namely, for any $c <
  O(\log n)$, we show that each coordinate $i$ satisfies
\[
(\wh{x}_i - x_i)^2 < \frac{c}{\log n} \cdot \frac{||\tail{x}{k}||_2^2}{k}
\]
with probability $1-2^{-\Omega(c)}$, as long as the hash functions are
fully independent.  This subsumes the previous bound and is optimal
for all $c$.  Using these improved point estimates, we prove a stronger
concentration result for set estimates by first analyzing the covariance
matrix and then using a median-of-median-of-medians argument to
bootstrap the failure probability bounds.  These results also give
improved results for $\ell_2$ recovery of exactly $k$-sparse estimates
$x^*$ when $x$ is drawn from a distribution with suitable decay, such
as a power law or lognormal.

We complement our results with simulations of Count-Sketch on a power
law distribution.  The empirical evidence indicates that our theoretical
bounds give a precise characterization of the algorithm's performance:
the asymptotics are correct and the associated constants are small.

Our proof shows that any symmetric random variable with finite variance
and positive Fourier transform concentrates around $0$ at least as well as a
Gaussian.  This result, which may be of independent interest, gives good
concentration even when the noise does not converge to a Gaussian.
\end{abstract}

\thispagestyle{empty}
\end{titlepage}

\section{Introduction}

The \emph{heavy hitters} problem and the closely related \emph{sparse
  recovery} problem are two of the most fundamental problems in the
field of sketching and streaming
algorithms~\cite{CCF02,CM06,GI10,CH10,M05}.  The goal is to
efficiently identify and estimate the $k$ largest coordinates of an
$n$-dimensional vector using a linear sketch $Ax$ of $x$, where $A \in
\R^{m\times n}$ has $m = O(k \log^c n)$ rows.  The strongest commonly
used formal guarantee for the quality of such an estimate is the
$\ell_\infty/\ell_2$ guarantee: this is a bound for the estimate $\wh
x$ recovered from $Ax$ which is of the form
\begin{align}\label{e:linfl2}
  \norm{\infty}{\wh{x}-x}^2 \leq \norm{2}{\tail{x}{k}}^2 / k,
\end{align}
where $\tail{x}{k}$ denotes the vector obtained from $x$ by replacing
its largest $k$ coordinates with $0$.

The classic approach for this problem is the Count-Sketch algorithm of
Charikar et al.~\cite{CCF02}, which uses $m = O(k \log n)$
measurements and satisfies~\eqref{e:linfl2} with $1 - 1/n^{\Theta(1)}$
probability.  It is simple, practical, and gives the best known
theoretical performance in many settings.  It also pioneered a
technique---hashing with random signs and estimating using
medians---that forms the basis for several subsequent works on sparse
recovery~\cite{GLPS,IPW11,HIKP12a,G12}.

\paragraph{Our result.} We show that, despite the popularity of
Count-Sketch, its performance has not been fully characterized and
understood. Specifically, we prove that the quality of the
approximation $\wh{x}$ given by Count-Sketch is better than the
standard bound~\eqref{e:linfl2} suggests.  While~\eqref{e:linfl2}
gives a bound on the \emph{worst-case} error of $\wh{x}$, we prove
that \emph{most} coordinates of $\wh{x}$ have asymptotically smaller
error than this worst case.

The Count-Sketch of a vector $x$ using $R$ rows of $C$
columns is defined as follows.  For $u \in [R]$, we choose hash functions $h_u:
[n] \to [C]$ and $s_u: [n] \to \{\pm 1\}$.  The sketch is
\[
y_{u,v} = \sum_{i : h_u(i) = v} s_u(i)x_i,
\]
which consists of $RC$ linear measurements.  The estimate $\wh{x}$ is given by
\[
\wh{x}_i = \median_u s_u(i)y_{u,h_u(i)}.
\]
Setting $C = O(k)$ and $R = O(\log n)$,~\cite{CCF02}
proves that~\eqref{e:linfl2} holds with $1-1/n^{\Theta(1)}$ probability.

Our main result is the following strengthening of the analysis
in~\cite{CCF02} for the accuracy of the \emph{point estimates}
$\wh{x}_i$ resulting from Count-Sketch, assuming the hash functions
are fully random:

\define{thm:main}{Theorem}{%
  Consider the estimate $\wh{x}$ of $x$ from Count-Sketch using $R$
  rows and $k \geq 2$ columns, with fully random hash functions.  For
  any $t \leq R$ and each index $i$,
  \[
  \Pr\left[(\wh{x}_i - x_i)^2 > \frac{t}{R} \cdot
    \frac{\norm{2}{\tail{x}{k}}^2}{k}\right] < 2e^{-\Omega(t)}.
  \]
}
\state{thm:main}

The standard analysis~\cite{CCF02} proves this bound in the special
case of $t=R$; one then gets~\eqref{e:linfl2} by setting $t = R =
\Theta(\log n)$ and applying a union bound.  We show in
Theorem~\ref{thm:lower} that our stronger result is optimal; it gives the
\emph{best possible} failure probability for all $t \leq O(\log(n/k))$
and all linear sketches.

Theorem~\ref{thm:main} shows that the \emph{average} squared error
of a set is $1/R$ times the previously known bound, i.e., the bound coming
from \eqref{e:linfl2}.  We extend this in Theorem~\ref{thm:bettersets} to show
concentration when estimating a set of coordinates, so that the total squared error over
the set satisfies our improved bound \emph{with high probability}.

\paragraph{Implications.}
Often, one performs Count-Sketch in order to estimate the largest $k$
coordinates of $x$.  In this case, the bound~\eqref{e:linfl2} gives an
optimal result for arbitrary vectors $x$~\cite{PW11} but not
necessarily for common distributions on $x$.  A particularly important
distribution is the \emph{power law} or \emph{Zipfian} distribution,
which is the standard distribution to analyze for sparse
recovery~\cite{CCF02,CM05,CRT06,BCDH}.  Consider again Count-Sketch with $R
= \Theta(\log n)$ rows.  We show that if $x$ follows the power law
$x_i = i^{-\alpha}$ for some constant $\alpha > 0.5$, then the error
in estimating the largest $k$ coordinates is $1/\log n$ times the
previously known bound with high probability (see Theorem~\ref{thm:l2}
for details).  The same result holds for other common distributions
such as lognormals or exponentials.

Previous work~\cite{P11} combined Count-Sketch with another sketch to
get the same bound as in Theorem~\ref{thm:l2}, but our result here applies
directly to the output of Count-Sketch.  This is important because
Count-Sketch is an algorithm that is used in practice, while chains of
algorithms are less likely to be used---especially because years later
we may discover that the original algorithm performed as well as the
chain!  For example, Google uses Count-Sketch to estimate the largest
$k$ coordinates of $x$ for their ``\texttt{top} table'', a core
language feature of their MapReduce programming language
Sawzall~\cite{PDGQ}.  Because many datasets Google encounters (for
example, the frequency of URLs on the web) are distributed as power
laws or lognormals~\cite{M04,BKM+,CM05}, Theorem~\ref{thm:l2} directly
applies to their setting.

\paragraph{Experiments.}
Finally, we complement our analysis with simulations of Count-Sketch
on a power law distribution.  These show that, unlike previous results,
Theorem~\ref{thm:main} and Theorem~\ref{thm:l2} correctly characterize
the asymptotic performance of point and top-$k$ estimates,
respectively.  Furthermore, the constants involved are small:
between $1/2$ and $2$.  We also find that Count-Sketch has
asymptotically less error than Count-Min, an alternative sketch
algorithm.

\paragraph{Limitations.}
Our analysis requires that the hash functions be fully random.
This is unfortunate because fully random hash functions would
take up more space than the sketch itself, but there are reasons
why this constraint is not too problematic.  One reason is that
Nisan's pseudorandom number generator~\cite{N92} lets us store the
hash
functions with only a $\log n$ factor increase in space.  Then
if we wish to run Count-Sketch on multiple different vectors, we can reuse
the hash functions.
A second reason is that one expects bounded independence to suffice
as long as the vector $x$ itself has sufficient entropy.  A result of
this form is known~\cite{MV10} when $\supp(x)$ is drawn at random from
a much larger domain.  For example, if $\supp(x)$ contains $n^{1/3}$
random coordinates out of $n$, then~\cite{MV10} implies
near-uniformity with 4-wise independence.

\paragraph{Our Techniques}  
Our basic strategy is to translate the problem of bounding
Count-Sketch error into a problem of proving a strong concentration
result for a certain class of random variables.  This, in
turn, we solve by analyzing the Fourier transform of such variables.

In more detail, the argument proceeds as follows.  The error
$\wh x_i - x_i$ is, by definition, the median over rows of
error terms coming from the different coordinates which hash
to the same column as $i$.  For each row, we separate the error term
into contributions from (i) the largest $k$ coordinates $j \in [k]$
and (ii) the remaining coordinates $j \in \overline{[k]}$.
The error of type (i) is zero with constant probability, and
we bound the error of type (ii) with our concentration result.
We then get a bound on $\wh x_i - x_i$ by using Chernoff bounds
to conclude that if each of $R$ symmetric random variables has a $\sqrt{c/R}$
chance of being small, then the median has a $1-2e^{-c/2}$ chance of being small.
This proves strong bounds for the error of point estimates;
we then analyze the pairwise dependence of said errors to
conclude a bound on the error of sets.

The concentration result we prove is a bound of the form $\Pr[|X| <
\epsilon] > \Omega(\epsilon)$, where $X$ has variance 1 and is a sum
of independent random variables, each of which is symmetric and zero
with probability at least $1/2$ (Corollary~\ref{cor:gen}).
Such a bound certainly holds in the limit as
$X$ converges to a Gaussian, but we need it to be true before $X$
converges. To see why this is subtle, consider the sum of $n$
independent $\pm 1/\sqrt{n}$ variables.  The Berry-Ess\'een theorem
gives our bound for $\eps > 1/\sqrt{n}$, but the bound is actually
false for $\eps < 1/\sqrt{n}$ when $n$ is odd.  When $n$ is even,
we can pair up the variables to get $n/2$ independent
$\{0, \pm2/\sqrt{n}\}$ variables.  These variables are zero with $1/2$
probability, so our bound applies for arbitrarily small $\eps$.
What distinguishes even $n$ from odd $n$?

The key for our argument is that, for a symmetric random variable $X$
with at least $1/2$ probability of being $0$, the Fourier transform
of $X$ is \emph{nonnegative}.
The Fourier transform of the triangle filter $\max \{1-|x|/\epsilon,0\}$
is also nonnegative.  We use the convolution theorem to translate
the expectation of the triangle filter into an integral in Fourier
space, and then use positivity to note that we can bound that
integral over all Fourier space by the integral over small
frequencies.  This we control directly by using the quadratic Taylor series
approximation to $\cos x$.  Because a lower bound on
the expectation of the triangle filter also gives a lower bound
on $\Pr[|X| < \epsilon]$, this proves what we want.

The above techniques let us prove Theorem~\ref{thm:main}, which shows
that, for Count-Sketch with $O(\log n)$ rows and $k$ columns, the
squared error in \emph{point estimates} of individual coordinates $i$
is exponentially distributed with mean $\mu^2/\log n$, where $\mu^2$
is the previously known bound.

We generally want to estimate multiple coordinates at a time, though,
so we proceed to bound the average error over \emph{sets} of
coordinate estimates.  It follows easily from Theorem~\ref{thm:main}
that the average error is $\mu^2/\log n$ in expectation; however, one
might expect to get this error with high probability, since averages
tend to concentrate as the size of the set grows.  Getting strong
concentration is difficult because the errors in different coordinates
are not independent.  To handle this, we resort to the following
approach.  Consider sets of size $k$.  We first show that the error
coming from collisions with small coordinates can be replaced by
independent noise, and then we define a variant of Count-Sketch which
is pairwise independent.  By bounding the difference of regular
Count-Sketch and this pairwise independent variant, we get a bound on the
covariance matrix of the errors for each coordinate in our set.
We then apply Chebyshev's inequality, getting $\mu^2/\log n$
error with failure probability $O(k^{-1/14})$
(Proposition~\ref{prp:sets}).  This bound is
nontrivial but falls well short of the ``high probability'' standard
of $O(1/k^c)$ failure probability for arbitrary constant $c$.
Unfortunately, while a more refined bound on the covariance matrix
could improve the exponent, no approach based on Chebyshev's inequality
can prove better than a $1/\sqrt{k}$ failure probability.

However, there is a kludge that gives the $\Thetat(1/k^c)$ failure
probability we want.  Consider running $O(c)$ Count-Sketches in parallel
and taking the (coordinate-wise) median of the results of each Count-Sketch.
Some analysis shows that this boosts the failure probability from
$\Thetat(k^{-1/14})$ to the desired $\Thetat(1/k^c)$ (Corollary~\ref{c:hack}).
Our goal, though, is to analyze the simple Count-Sketch algorithm that
people actually use instead of this hackish variant.  Notice that the kludge uses the
same set of measurements as Count-Sketch with an $O(c)$ factor more rows,
but then performs recovery by estimating each coordinate as a median
(over chunks) of medians (within chunks), rather than Count-Sketch's direct
medians.  To complete the argument we show, via our ``Median$^{3}$ Lemma''
(Lemma~\ref{l:median3}), that taking medians directly cannot be much worse
than computing the median of medians.  Thus true Count-Sketch also satisfies
the desired $\Thetat(1/k^c)$ bound on the failure probability
(Theorem~\ref{thm:bettersets}); in summary, the weak bound we get from
bounding the covariance matrix bootstraps into a better bound.


\section{Preliminaries}

\paragraph{Notation}
We use $f \gtrsim g$ to denote $f = \Omega(g)$ and $f \lesssim g$ to denote $f = O(g)$.

In the statement of Theorem \ref{thm:main},
$\tail{x}{k}$ denotes the vector consisting of all but
the largest $k$ coordinates of $x$.  More generally,
we think of the coordinates of $x$ as being sorted,
$\abs{x_1} \ge \abs{x_2} \ge \cdots \ge \abs{x_n}$.
This is purely a notational convenience, possible because Count-Sketch is
invariant under permutation of coordinates.

Given a real-valued random variable $X$, its
\emph{Fourier transform} is the function
\[\F(t) = \E[e^{2 \pi X t \sqrt{-1}}].\]
In general $\F(t)$ is complex-valued.  However, our random
variables are all symmetric; in this case $\F(t)$ is
real-valued and equals $\E[\cos(2\pi X t)]$.

\section{Concentration Lemmas}

The following is the key lemma for our proof.

\begin{lemma}\label{lemma:gen}
  Let $X$ be a symmetric, real-valued random variable
  with variance $1$, and suppose that its Fourier transform $\F(t)$ is nonnegative.
  Then, for $\eps \le 1$, $\Pr[\abs{X} < \eps] \gtrsim \eps$.
\end{lemma}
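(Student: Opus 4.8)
The plan is to bound $\Pr[|X| < \eps]$ from below by the expectation of a nonnegative ``triangle filter'' supported on $[-\eps, \eps]$, and then evaluate that expectation in Fourier space using the convolution theorem. Concretely, let $\Lambda(x) = \max\{1 - |x|/\eps, 0\}$, so that $\Lambda \le \mathbf{1}_{|x| < \eps}$ and hence $\Pr[|X| < \eps] \ge \E[\Lambda(X)]$. The key structural fact is that $\Lambda$ is the (scaled) autocorrelation of an indicator of an interval, so its Fourier transform is nonnegative: $\wh{\Lambda}(t) = \eps \left(\frac{\sin(\pi \eps t)}{\pi \eps t}\right)^2 \ge 0$. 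By Parseval/the convolution theorem, $\E[\Lambda(X)] = \int_{-\infty}^{\infty} \wh{\Lambda}(t)\, \F(t)\, dt$ (up to the normalization fixed by the paper's convention for $\F$), and since $\F(t) \ge 0$ by hypothesis and $\wh{\Lambda}(t) \ge 0$, the integrand is nonnegative everywhere. Therefore we may discard all but a small neighborhood of the origin: $\E[\Lambda(X)] \ge \int_{-T}^{T} \wh{\Lambda}(t)\, \F(t)\, dt$ for any cutoff $T$ we like.

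Next I would choose the cutoff $T = \Theta(1)$ (independent of $\eps$, using $\eps \le 1$) and lower-bound each factor on $[-T, T]$. For $\wh{\Lambda}$, on $|t| \le T$ with $\eps|t| \le \eps T \lesssim 1$ we have $\frac{\sin(\pi \eps t)}{\pi \eps t} \ge$ some absolute constant, so $\wh{\Lambda}(t) \gtrsim \eps$ throughout. For $\F(t) = \E[\cos(2\pi X t)]$, I use the quadratic Taylor bound $\cos\theta \ge 1 - \theta^2/2$ to get $\F(t) \ge 1 - 2\pi^2 t^2 \E[X^2] = 1 - 2\pi^2 t^2$ (here is where variance $1$ enters); choosing $T$ a small enough absolute constant makes $\F(t) \ge 1/2$ on $[-T, T]$. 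Combining, $\E[\Lambda(X)] \ge \int_{-T}^{T} \wh{\Lambda}(t)\,\F(t)\,dt \gtrsim \eps \cdot T \gtrsim \eps$, which is the claim.

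The main obstacle — really the only delicate point — is being careful about the Fourier normalization and making sure the identity $\E[\Lambda(X)] = \int \wh{\Lambda}(t)\F(t)\,dt$ has the constants right under the paper's convention $\F(t) = \E[e^{2\pi i X t}]$; with this (unitary, $2\pi$-in-the-exponent) convention the Parseval pairing is clean and $\wh{\Lambda}(t) = \eps\,\mathrm{sinc}^2(\eps t)$ exactly, so no stray $2\pi$ factors appear, but it is worth verifying rather than asserting. A secondary point is justifying that we may restrict the integral to $[-T,T]$: this needs the integrand to be genuinely nonnegative (not merely the integral), which is exactly why we need $\F \ge 0$ as a hypothesis and why we insisted on a triangle filter (whose transform is a square) rather than, say, the box filter $\mathbf{1}_{|x|<\eps}$ itself. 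Everything else is a routine estimate: the $\sin$ factor is bounded below by a constant on a constant-length interval, and the $1 - \theta^2/2$ bound on $\cos$ combined with $\Var(X) = 1$ controls $\F$ near the origin.
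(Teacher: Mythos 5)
Your proposal is correct and takes essentially the same route as the paper: both use the triangle filter $\max\{1-|x|/\eps,0\}$, observe that its Fourier transform $\frac{\sin^2(\pi\eps t)}{\pi^2\eps t^2}$ is nonnegative, apply the Parseval/convolution identity so that the nonnegativity of $\F$ allows truncating the integral to a constant-length interval around the origin, and there use the Taylor bound $\cos\theta \ge 1-\theta^2/2$ together with $\Var(X)=1$ to keep $\F \ge 1/2$. The paper takes the cutoff $T=1/(2\pi)$ and bounds the sinc factor by its value at $t=1/(2\pi)$, but that is merely one concrete instantiation of the constant-cutoff choice you describe.
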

\begin{proof}
Because $\cos x \ge 1 - \tfrac{1}{2} x^2$ holds for all $x \in \R$,
we have
\[\F(t) \ge \E[1 - \tfrac{1}{2} (2 \pi X t)^2] = 1 - 2 \pi^2 t^2\ \ \ \forall\ t\in\R.\]
In particular, $\F(t) \ge \tfrac{1}{2}$ for $t \in [-\tfrac{1}{2\pi},\tfrac{1}{2\pi}]$.
Let $T_\epsilon(x)$ be the triangle filter
\[T_\epsilon(x) = \begin{cases} 1 - \tfrac{1}{\epsilon}|x| & \text{if }|x| < \epsilon \\ 0 & \text{otherwise} \end{cases}\]
and recall the Fourier transform relation
\[T_\epsilon(x) = \int_{-\infty}^\infty \frac{\sin^2(\pi t \epsilon)}{\pi^2 t^2 \epsilon} e^{2 \pi x t \sqrt{-1}}\ \text{d}t.\]
Using this relation and switching the order of integration,
\[\E[T_\epsilon(X)] = \int_{-\infty}^\infty \frac{\sin^2(\pi t \epsilon)}{\pi^2 t^2 \epsilon} \F(t)\,\text{d}t.\]
The integrand is nonnegative, so we get a lower bound on $\E[T_\epsilon(X)]$
by integrating only over the interval $[-\tfrac{1}{2\pi},\tfrac{1}{2\pi}]$.  On this
interval we have $\F(t) \ge \tfrac{1}{2}$ and,
because $\epsilon < \pi$,
$\frac{\sin^2(\pi t \epsilon)}{\pi^2 t^2 \epsilon}$
is bounded below by its
value at $t = 1/(2\pi)$.  Putting this together, we find that
\[\E[T_\epsilon(X)] \ge \int_{-1/(2\pi)}^{1/(2\pi)} \frac{\sin^2(\pi t \epsilon)}{\pi^2 t^2 \epsilon} \F(t)\,\text{d}t \gtrsim \frac{\sin^2(\epsilon/2)}{\epsilon}.\]
For $\epsilon \le 1$ we have $\tfrac{\sin^2(\epsilon/2)}{\epsilon} \gtrsim \epsilon$.
Now noting that $\Pr[|X| < \epsilon] \ge \E[T_\epsilon(X)]$ completes
the proof.
\end{proof}

\begin{corollary} \label{cor:gen}
Let $\{X_i : i \in [n]\}$ be independent symmetric random variables such that
$\Pr[X_i = 0] \ge 1/2$ for each $i$.
Set $X = \sum_{i=1}^n X_i$ and $\sigma^2 = \E[X^2]$.
For $\epsilon \le 1$, $\Pr[|X| < \epsilon \sigma] \gtrsim \epsilon$.
\end{corollary}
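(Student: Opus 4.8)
\medskip

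The plan is to reduce the corollary to Lemma~\ref{lemma:gen} by showing that $Y := X/\sigma$ is symmetric, has variance $1$, and has nonnegative Fourier transform; the first two properties are immediate, so the real content is the positivity of $\F_Y$. First I would record that since each $X_i$ is symmetric, each $X_i/\sigma$ is symmetric, hence so is $X/\sigma = \sum_i X_i/\sigma$, and $\E[(X/\sigma)^2] = \sigma^2/\sigma^2 = 1$. By independence, the Fourier transform factors: $\F_{X/\sigma}(t) = \prod_{i=1}^n \F_{X_i/\sigma}(t)$, so it suffices to show each individual factor $\F_{X_i/\sigma}(t) = \E[\cos(2\pi (X_i/\sigma) t)]$ is nonnegative for all $t$.

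\medskip

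The key observation is that a symmetric random variable which equals $0$ with probability at least $1/2$ has nonnegative Fourier transform. Fix $i$ and write $p = \Pr[X_i = 0] \ge 1/2$. Condition on whether $X_i = 0$: then $\F_{X_i/\sigma}(t) = p + (1-p)\,\E[\cos(2\pi (X_i/\sigma) t) \mid X_i \neq 0]$. The conditional expectation is an average of cosines, hence is at least $-1$, so $\F_{X_i/\sigma}(t) \ge p - (1-p) = 2p - 1 \ge 0$. (Equivalently: the distribution of $X_i$ is a mixture $p\,\delta_0 + (1-p)\nu$ for a symmetric law $\nu$, and its Fourier transform is $p + (1-p)\hat\nu(t) \ge p - (1-p) \ge 0$ since $|\hat\nu(t)| \le 1$.) Taking the product over $i$, we conclude $\F_{X/\sigma}(t) \ge 0$ for all $t$.

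\medskip

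Having verified the hypotheses, applying Lemma~\ref{lemma:gen} to $Y = X/\sigma$ gives $\Pr[|X/\sigma| < \epsilon] \gtrsim \epsilon$ for $\epsilon \le 1$, which is exactly $\Pr[|X| < \epsilon\sigma] \gtrsim \epsilon$, as claimed. One edge case to dispatch: if $\sigma = 0$ then $X \equiv 0$ and the statement is trivial (the probability is $1$), so we may assume $\sigma > 0$ when forming $X/\sigma$. I do not anticipate a genuine obstacle here; the only subtlety worth stating explicitly is the pointwise bound $\F_{X_i/\sigma}(t) \ge 2p-1$, which is where the $\Pr[X_i = 0] \ge 1/2$ hypothesis is used and is the reason the Fourier transforms multiply to something nonnegative rather than merely bounded.
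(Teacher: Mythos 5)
Your proof is correct and follows essentially the same route as the paper: factor the Fourier transform of $X/\sigma$ over the independent summands, bound each factor below by $p_i - (1-p_i) \ge 0$ using the hypothesis $\Pr[X_i = 0] \ge 1/2$, and then invoke Lemma~\ref{lemma:gen}. The only addition is the (harmless, and reasonable) remark about the degenerate case $\sigma = 0$.
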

\begin{proof}
For each $i \in [n]$, let $p_i = \Pr[X_i = 0]$.  The
Fourier transform of $X_i$ is
$\F_i(t) = p_i + (1-p_i) \E[\cos(2 \pi X_i t) \mid X_i \ne 0] \ge p_i + (1-p_i)(-1)$.
Because $p_i \ge 1/2$, this is nonnegative.  Now $X/\sigma$ is a symmetric random
variable with nonnegative Fourier transform $\prod_{i=1}^n \F_i(t/\sigma)$ and
with variance $\E[(X/\sigma)^2] = 1$; applying Lemma \ref{lemma:gen}
to it gives the desired bound.
\end{proof}


Note that Lemma \ref{lemma:gen} is not true without the
positivity assumption; in particular, as we observed in the
introduction, Corollary \ref{cor:gen} is not true when
$\Pr[X_i = 0]$ is small.  Indeed, it seems
intuitive that we get strong concentration
around $0$ as a consequence of the large probability of each individual
variable being $0$.  We also remark that there are analogs of Lemma
\ref{lemma:gen} and Corollary \ref{cor:gen} using only first moment
bounds.  The proof is nearly identical, so we omit it.

We also need the following lemma for concentration of medians.

\begin{lemma}\label{lemma:medians}
  Suppose $X_1, \dotsc, X_t$ are independent symmetric random variables such
  that, for some $x,p > 0$, we have
  $\Pr[\abs{X_i} < x] > p$ for all $i \in [t]$.  Then
  \[
  \Pr\left[\abs{\median_{i \in [t]} X_i} \geq x\right] < 2e^{-t p^2/2}.
  \]
\end{lemma}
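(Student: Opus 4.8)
The plan is to reduce the statement about the median to a statement about a sum of independent indicator variables, and then apply a standard Chernoff bound. First I would observe that the event $|\median_{i\in[t]} X_i| \ge x$ implies that at least half of the $X_i$ satisfy $|X_i| \ge x$: indeed, if more than half of the values lie in the open interval $(-x,x)$, then the median — being the middle order statistic — also lies in that interval. So letting $Z_i = \mathbf{1}[|X_i| \ge x]$, we have
\[
\Pr\left[\left|\median_{i\in[t]} X_i\right| \ge x\right] \le \Pr\left[\sum_{i=1}^t Z_i \ge t/2\right].
\]
Here the symmetry hypothesis is not actually needed for this direction; it is presumably stated because the lemma will be applied to symmetric variables elsewhere, or to make ``median'' unambiguous. (With an even number of terms one should fix a convention for the median; any reasonable convention still forces at least $\lceil t/2\rceil$ of the $Z_i$ to be $1$ on the bad event, which only helps.)

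Next I would note that the $Z_i$ are independent Bernoulli variables with $\E[Z_i] = \Pr[|X_i|\ge x] < 1-p$, so $\mu := \E[\sum_i Z_i] < (1-p)t$. I want to bound $\Pr[\sum_i Z_i \ge t/2]$, and since $t/2 = (1-p)t + (p-1/2)t$... wait — this is only a deviation \emph{above} the mean when $p > 1/2$, so for small $p$ this naive splitting is the wrong way to think about it. The cleaner route: work with the complementary variables $W_i = \mathbf{1}[|X_i| < x]$, which are independent with $\E[W_i] > p$, hence $\E[\sum_i W_i] > pt$. The bad event $\sum_i Z_i \ge t/2$ is exactly $\sum_i W_i \le t/2$, a \emph{lower}-tail deviation whenever $p > 1/2$ — but the lemma is meaningful for all $p>0$, and when $p \le 1/2$ the claimed bound $2e^{-tp^2/2}$ is a constant close to $2$, so it is trivially true (probabilities are at most $1 \le 2e^{-tp^2/2}$ exactly when $tp^2/2 \le \ln 2$, and otherwise we are in the regime $p$ not too small). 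So the substantive case is $p$ bounded away from $0$; I will handle it with a multiplicative Chernoff bound.

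Concretely, for independent $[0,1]$-valued $W_i$ with $S = \sum W_i$ and $\E[S] = \mu > pt$, the standard lower-tail Chernoff bound gives $\Pr[S \le (1-\delta)\mu] \le e^{-\delta^2\mu/2}$. Applying this with the threshold $t/2$: if $t/2 \ge \mu$ there is nothing to prove, and otherwise we set $(1-\delta)\mu = t/2$, i.e. $\delta = 1 - t/(2\mu) \ge 1 - t/(2pt) $... this needs $\mu$ in the exponent, and $\delta^2 \mu = (\mu - t/2)^2/\mu$. Using $\mu > pt$ and the fact that $(\mu - t/2)^2/\mu$ is increasing in $\mu$ for $\mu > t/2$, we get $\delta^2\mu \ge (pt - t/2)^2/(pt) = t(p-1/2)^2/p$. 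Hmm — this yields an exponent of the form $t(p-1/2)^2/(2p)$, not the stated $tp^2/2$; the two disagree. I suspect the intended argument is simply Hoeffding's inequality rather than a multiplicative bound: the $W_i \in \{0,1\}$ are independent with mean $> p$, so $\Pr[S \le t/2]$ — in the regime $p \ge 1/2$ where this is a genuine lower deviation of size $\ge (p - 1/2)t$ — is at most $e^{-2t(p-1/2)^2}$ by Hoeffding, which is again not literally $e^{-tp^2/2}$ but is the same flavor. The likely cleanest path, and the one I would write up, is: \textbf{the main obstacle is bookkeeping the constant in the exponent, not any real difficulty}. I would handle it by the following clean split. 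If $tp^2/2 \le \ln 2$ the bound is trivial since the right-hand side is $\ge 1$. Otherwise, note $\Pr[|X_i| \ge x] \le 1-p$, and by Hoeffding applied to the $Z_i$,
\[
\Pr\!\left[\sum_{i=1}^t Z_i \ge \tfrac{t}{2}\right]
= \Pr\!\left[\sum_{i=1}^t \bigl(Z_i - \E Z_i\bigr) \ge \tfrac{t}{2} - \sum_i \E Z_i\right]
\le \Pr\!\left[\sum_{i=1}^t \bigl(Z_i - \E Z_i\bigr) \ge t\bigl(p - \tfrac{1}{2}\bigr)\right],
\]
and if $p \ge 1/2$ this is at most $e^{-2t(p-1/2)^2} \le e^{-tp^2/2}$ once one checks $2(p-1/2)^2 \ge p^2/2$ for $p \ge 1/2$ — which is equivalent to $(3p/2 - 1)(p/2) \ge 0$, true in that range. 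For $p < 1/2$ with $tp^2/2 > \ln 2$: here $t > 2\ln 2 / p^2 > 8\ln 2$, and one can instead observe that the assumption $\Pr[|X_i|<x] > p$ with $p<1/2$ still gives, via a direct Chernoff/Hoeffding estimate on $\Pr[\text{fewer than } t/2 \text{ of the } W_i \text{ are } 1]$ — actually this lower-tail event is now \emph{above} the mean's reach in the sense that $t/2 > pt = $ lower bound on $\mu$, so we genuinely need $\mu$ itself, and the bound $e^{-tp^2/2}$ should be read as following from $\Pr[S \le t/2] \le \Pr[\text{Binomial}(t,1-p) \ge t/2]$ together with a crude bound $\le e^{-t\cdot 2(1/2 - (1-p))_+^2}$ which only helps when $p > 1/2$.

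Given the tension above, the write-up I would actually commit to is the \textbf{symmetry-free reduction plus a single invocation of a Chernoff bound with whatever exponent it gives}, and then simply verify that that exponent dominates $tp^2/2$ by elementary case analysis, flagging that for $p$ below a small absolute constant the inequality $\Pr[\cdot] \le 2e^{-tp^2/2}$ is vacuous. The only step requiring care is confirming the inequality between the Chernoff/Hoeffding exponent and $tp^2/2$ across all $p \in (0,1)$; everything else — the reduction ``bad median $\Rightarrow$ at least half the indicators fire'' and the application of the tail bound — is routine.
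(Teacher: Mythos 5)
Your reduction is correct as far as it goes — if the median has magnitude at least $x$, then at least half of the $X_i$ satisfy $|X_i| \ge x$ — but you explicitly discard the symmetry hypothesis, saying ``the symmetry hypothesis is not actually needed for this direction,'' and that is exactly where the argument breaks. Using the two-sided indicators $Z_i = \mathbf{1}[|X_i| \ge x]$, you only know $\E[Z_i] < 1-p$. When $p < 1/2$ this mean exceeds $1/2$, so the threshold $t/2$ is \emph{below} the mean and $\Pr[\sum Z_i \ge t/2]$ admits no nontrivial upper bound at all (it can be arbitrarily close to $1$). Meanwhile $2e^{-tp^2/2}$ is very small for large $t$ whenever $p$ is bounded away from $0$, so the lemma is decidedly not vacuous in the range $p \in (0,1/2)$. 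You noticed this tension at the end but did not resolve it, and the fallback you describe (``flag that for $p$ below a small constant the bound is vacuous'') does not cover, say, $p = 1/4$ with $t$ large.

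The paper's fix is to exploit symmetry to \emph{halve} the bad probability and treat the two tails separately. Symmetry gives $\Pr[X_i \ge x] = \Pr[X_i \le -x]$, and since their union has probability $< 1-p$, each has probability $< (1-p)/2 = 1/2 - p/2$. Now the event ``median $\ge x$'' forces at least $t/2$ of the one-sided indicators $E_i = \mathbf{1}[X_i \ge x]$ to fire, and $\E[\sum E_i] < t/2 - pt/2$, so this is a genuine upward deviation of size more than $pt/2$ for \emph{every} $p > 0$. Hoeffding then gives $\Pr[\sum E_i \ge t/2] < e^{-2(pt/2)^2/t} = e^{-tp^2/2}$. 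The mirror event ``median $\le -x$'' gets the identical bound, and a union bound produces the factor of $2$. So the factor of $2$ in the statement is not slack — it is precisely the price of splitting into two one-sided tails, and that split is what makes the argument work for all $p>0$.
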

\begin{proof}
  Let $E_i$ denote the indicator for the event that $X_i \geq x$.
  Because $X_i$ is symmetric we have
  $\Pr[E_i = 1] < (1 - p) / 2$, so $\E[\sum_{i=1}^t E_i] < t/2 - pt/2$.
  The $E_i$ are independent, so by a Chernoff bound we have that
  \[
  \Pr\left[\sum_{i=1}^t E_i \geq \frac{t}{2}\right] < e^{-2(pt/2)^2/t} = e^{-t p^2/2}.
  \]
  The same bound applies to the event that at least $t/2$ of the $X_i$
  are less than $-x$, and if neither event occurs then the median is in
  the interval $(-x,x)$.
\end{proof}

\section{Count-Sketch}

\declare{thm:main}
\begin{proof}
  Fix $i \in [n]$.  For each row $u$ and coordinate $j \in [n]$, define
  \[
  X_{u,j} = \left\{
    \begin{array}{cl}
      s_u(j)x_j & \text{if }h_u(j) = h_u(i)\\
      0 & \text{otherwise.}
    \end{array}
  \right.
  \]
  For each row $u$, define
  \[T_u = \sum_{j \in \overline{[k]}\setminus \{i\}} X_{u,j}\ \ \ \text{and}\ \ \ H_u = \sum_{j \in [k]\setminus \{i\}} X_{u,j}.\]
  Then, by definition,
  \[
  \wh{x}_i - x_i = \median_u H_u + T_u.
  \]
  Each random variable $X_{u,j}$ is symmetric, equals $0$ with
  probability $1 - 1/k \ge 1/2$, and otherwise equals $\pm x_j$.
  Moreover, for each row $u$, the random variables $\{X_{u,j}\}$ are independent.
  Thus $\E[T_u^2] = \sum_{j \in \overline{[k]}\setminus\{i\}} x_j^2/k \le \norm{2}{\tail{x}{k}}^2/k$, so
  Corollary~\ref{cor:gen} shows that
  \[
  \Pr\left[\abs{T_u} < \epsilon \cdot \frac{\norm{2}{\tail{x}{k}}}{\sqrt{k}}\right] \gtrsim \epsilon
  \]
  for all $\epsilon \le 1$.
  Furthermore, $H_u = 0$ with probability at least $(1-1/k)^k \ge 1/4$,
  i.e., with constant probability.
  Since $H_u$ is independent of $T_u$, this means that
  \[
  \Pr\left[\abs{H_u + T_u} < \epsilon \cdot \frac{\norm{2}{\tail{x}{k}}}{\sqrt{k}}\right] \gtrsim \epsilon.
  \]
  Therefore Lemma~\ref{lemma:medians} implies
  \[
  \Pr\left[\abs{\wh{x}_i - x_i} > \epsilon \cdot \frac{\norm{2}{\tail{x}{k}}}{\sqrt{k}}\right] < 2e^{-\Omega(R \epsilon^2)}.
  \]
  Setting $\epsilon = \sqrt{t/R}$ yields the desired result.
\end{proof}

\section{Concentration for Sets} \label{sec:sets}

Theorem~\ref{thm:main} shows that each individual error
$(\wh{x}_i-x)^2$ has a constant chance of being less than $O(1/R)$
times the $\ell_\infty^2$ bound.  One would reasonably suspect that
the average error over large sets would satisfy this bound with high
probability.  This is in fact true.  The following result is proven in
Appendix~\ref{app:sets}.


\define{prp:sets}{Proposition}{%
  Fix a constant $t > 0$ and consider the estimate $\wh{x}$ of $x$ from Count-Sketch
  using $R$ rows and $C = ck$ columns, $\log k \lesssim R$, for sufficiently large
  (depending on $t$) constant $c$.  For any set $S \subset [n]$ with $\abs{S} \leq k$,
  \begin{align*}
  \Pr\left[\norm{2}{\wh{x}_S - x_S}^2 > t \cdot \abs{S} \cdot \frac{1}{R} 
    \cdot \frac{\norm{2}{\tail{x}{k}}^2}{k} \right] \lesssim \frac{1}{\abs{S}^{1/14}}.
  \end{align*}
}
\declare{prp:sets}

The analysis leading to Proposition~\ref{prp:sets} is excessively lossy but, as we will see
presently, we can improve the resulting bound after the fact so that
the loss is only temporary.

\section{Improving the Probability Bound}
\label{sec:mediancubed}

To get a better bound on the failure probability than
Proposition~\ref{prp:sets}, we first consider the procedure of
running Count-Sketch a constant number of times in parallel
and taking the median of the resulting estimates.  Using
this procedure lets us improve the exponent in the failure
probability to any desired constant.

\begin{lemma} \label{l:medians}
Let $x^{(1)},\dotsc,x^{(r)}$ be vectors in $\R^n$ and let
$x \in \R^n$ be the coordinate-wise median of
$\{x^{(1)},\dotsc,x^{(r)}\}$.  If at least a $3/4$ fraction
of the variables $x^{(i)}$ satisfy $\norm{2}{x^{(i)}} < C$,
then $\norm{2}{x} < C\sqrt{3}$.
\end{lemma}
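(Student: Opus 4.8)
The plan is to control $\norm{2}{x}^2 = \sum_{j} x_j^2$ coordinate by coordinate, using the fact that a median value is dominated in absolute value by at least half of its samples. For each coordinate $j$, set $B_j = \{i \in [r] : \abs{x^{(i)}_j} \ge \abs{x_j}\}$. Since $x_j$ is a median of $x^{(1)}_j, \dotsc, x^{(r)}_j$, at least $r/2$ of these samples lie weakly on the far side of $x_j$ from $0$---all $\ge x_j$ if $x_j \ge 0$, or all $\le x_j$ if $x_j < 0$---and each such sample has absolute value at least $\abs{x_j}$; hence $\abs{B_j} \ge r/2$.

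Let $G = \{i : \norm{2}{x^{(i)}} < C\}$ be the set of good samples, so $\abs{G} \ge 3r/4$ by hypothesis. By inclusion-exclusion, $\abs{B_j \cap G} \ge \abs{B_j} + \abs{G} - r \ge \abs{G} - r/2$, and since $\abs{G} \ge 3r/4$ this is at least $\abs{G}/3$ (in particular, positive). For each $j$, every $i \in B_j \cap G$ satisfies $(x^{(i)}_j)^2 \ge x_j^2$, so
\[
\abs{B_j \cap G} \cdot x_j^2 \;\le\; \sum_{i \in B_j \cap G} (x^{(i)}_j)^2 \;\le\; \sum_{i \in G} (x^{(i)}_j)^2 ,
\]
and dividing by $\abs{B_j \cap G} \ge \abs{G}/3$ gives $x_j^2 \le \tfrac{3}{\abs{G}} \sum_{i \in G} (x^{(i)}_j)^2$.

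Summing over all coordinates $j$ and exchanging the order of summation,
\[
\norm{2}{x}^2 \;\le\; \frac{3}{\abs{G}} \sum_{i \in G} \norm{2}{x^{(i)}}^2 \;<\; \frac{3}{\abs{G}} \cdot \abs{G} \cdot C^2 \;=\; 3 C^2 ,
\]
so $\norm{2}{x} < C\sqrt{3}$, as claimed.

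The one point needing care is the deterministic median fact $\abs{B_j} \ge r/2$, which should be checked against whatever convention is used for the median of an even number of values (the property actually used---at least $r/2$ of the samples are $\ge x_j$ and at least $r/2$ are $\le x_j$---holds for the standard conventions, including averaging the two central order statistics). Beyond that, the thing to get right is not to weaken $\abs{B_j \cap G}$ to the cruder bound $r/4$: retaining the term $\abs{G} - r/2$ is exactly what sharpens the conclusion from $\norm{2}{x} < 2C$ to $\norm{2}{x} < C\sqrt{3}$. I do not anticipate any real obstacle.
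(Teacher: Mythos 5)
Your proof is correct and follows essentially the same strategy as the paper's: for each coordinate $j$, use the median property to get at least $r/2$ indices $i$ with $(x^{(i)}_j)^2 \ge x_j^2$, intersect with the good set to obtain at least $r/4$ good dominating indices, bound $x_j^2$ by $3$ times the average of $(x^{(i)}_j)^2$ over the good set, and sum over $j$. The only cosmetic difference is that you keep $\abs{G}$ as a variable and use inclusion--exclusion explicitly, whereas the paper fixes a subset of exactly $3r/4$ good indices and works with that; the inequality $\abs{G} - r/2 \ge \abs{G}/3$ you invoke is exactly the threshold condition $\abs{G} \ge 3r/4$, so the constant $3$ appears in the same way in both arguments.
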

\begin{proof}
Choose $3r/4$ indices $i$ satisfying $\norm{2}{x^{(i)}} < C$;
call these indices ``good''.  Fix a coordinate $j$.
For at least $r/2$ indices $i$ we have $x_j \le x^{(i)}_j$
and for at least $r/2$ we have $x_j \ge x^{(i)}_j$;
thus (using the first group if $x_j \ge 0$ and the
second group if $x_j < 0$) for at least $r/2$ indices
$i$ we have $x_j^2 \le (x^{(i)}_j)^2$.  Of these, at least
$r/2-r/4 = r/4$ must also be good.  Hence
\[x_j^2 \le \mean_{\substack{\text{good $i$ s.t.}\\x_j^2 \le (x^{(i)}_j)^2}} (x^{(i)}_j)^2 \le \frac{1}{r/4} \sum_{\substack{\text{good $i$ s.t.}\\x_j^2 \le (x^{(i)}_j)^2}} (x^{(i)}_j)^2 \le \frac{1}{r/4} \sum_{\text{good $i$}} (x^{(i)}_j)^2 = 3 \mean_{\text{good $i$}} (x^{(i)}_j)^2.\]
Summing over the coordinates $j$ gives
$\norm{2}{x}^2 \le 3 \mean_{\text{(good $i$)}} \norm{2}{x^{(i)}}^2 < 3C^2$.
\end{proof}

We remark in passing that there is a generalization of
Lemma~\ref{l:medians} in which one replaces Euclidean
balls with convex, coordinate-wise symmetric sets.

\begin{lemma} \label{l:hack}
Suppose $\{X_1,\dots,X_r\}$ are independent random variables
taking values in $\R^n$.  Let $X$ be the random variable
obtained by taking the coordinate-wise median of
$\{X_1,\dots,X_r\}$.  If $\Pr[\norm{2}{X_i} < C] \ge 1-p$
for each $i$, then $\Pr[\norm{2}{X} < C\sqrt{3}] \ge 1-(11p)^{r/4}$.
\end{lemma}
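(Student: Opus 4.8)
The plan is to derive the bound directly from the deterministic Lemma~\ref{l:medians}. Call an index $i \in [r]$ \emph{bad} if $\norm{2}{X_i} \ge C$ and \emph{good} otherwise; by hypothesis each index is bad with probability at most $p$, and since the $X_i$ are independent, so are these events. Lemma~\ref{l:medians} says that whenever at least a $3/4$ fraction of the $X_i$ are good, the coordinate-wise median $X$ satisfies $\norm{2}{X} < C\sqrt{3}$. Taking the contrapositive, the failure event $\{\norm{2}{X} \ge C\sqrt{3}\}$ is contained in the event $B$ that the number of bad indices exceeds $r/4$, i.e.\ that at least $m := \lfloor r/4\rfloor + 1 > r/4$ of the indices are bad.

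Next I would bound $\Pr[B]$ by a union bound over which indices are bad. For a fixed set $T$ of $m$ indices, independence gives $\Pr[\text{every } i \in T \text{ is bad}] \le p^m$, and there are $\binom{r}{m}$ such sets, so
\[
\Pr[B] \;\le\; \binom{r}{m}\, p^m \;\le\; \left(\frac{er}{m}\right)^{m} p^m \;\le\; (4ep)^m,
\]
using $\binom{r}{m} \le (er/m)^m$ together with $r/m < 4$. If $p \ge 1/(4e)$ the claim is vacuous, since then $11p > 11/(4e) > 1$ and hence $(11p)^{r/4} \ge 1 \ge \Pr[B]$. Otherwise $4ep < 1$, so decreasing the exponent from $m$ to $r/4$ only increases the right-hand side, giving $\Pr[B] \le (4ep)^{r/4} \le (11p)^{r/4}$ since $4e < 11$. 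Combining with the containment $\{\norm{2}{X}\ge C\sqrt3\}\subseteq B$ yields the asserted bound.

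I do not expect any real obstacle here: the proof is a one-step reduction to Lemma~\ref{l:medians} followed by a standard binomial-tail estimate. The only points requiring care are extracting the correct threshold from Lemma~\ref{l:medians} (it is the \emph{good} count that must reach $3r/4$, hence the bad count must strictly exceed $r/4$), dealing with the integrality of $m$, and observing that the slack between $4e \approx 10.87$ and the stated constant $11$ comfortably absorbs both this rounding and the degenerate range of $p$.
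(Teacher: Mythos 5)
Your proof is correct and takes essentially the same route as the paper: reduce to Lemma~\ref{l:medians} and then bound the probability of having more than $r/4$ bad indices by a union-bound/binomial estimate $\binom{r}{m}p^m \le (4ep)^{r/4} \le (11p)^{r/4}$. You are slightly more careful about integrality of $r/4$ and the degenerate regime $p \ge 1/(4e)$, but the idea and the constants are the same.
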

\begin{proof}
  Let $E_i$ denote the event that $\norm{2}{X_i} \ge C$.  The
  probability that at least $r/4$ of the $E_i$ occur is at most
  $\binom{r}{r/4}p^{r/4} \leq (4ep)^{r/4} \leq (11p)^{r/4}$.  Thus,
  with probability at least $1-(11p)^{r/4}$, at least a $3/4$ fraction
  of the variables $X_i$ satisfy $\norm{2}{X_i} < C$.  When this holds
  we have $\norm{2}{X} < C\sqrt{3}$ by Lemma~\ref{l:medians}.
\end{proof}

\begin{corollary} \label{c:hack} Fix a real constant $t>0$
  and a positive integer $d$ and consider the
  estimate $\wh{x}$ of $x$ coming from running $56d$ instances of
  Count-Sketch in parallel, each using $R$ rows and $C = ck$ columns
  (for sufficiently large --- depending $d$ and $t$ --- constant $c$), and
  then taking the coordinate-wise median of the $56d$ resulting
  estimates.  Suppose $\log k \lesssim R$.  For any set $S \subset [n]$ with $\abs{S} \leq k$,
  \begin{align*}
  \Pr\left[\norm{2}{\wh{x}_S - x_S}^2 > t \cdot \abs{S} \cdot \frac{1}{R} 
    \cdot \frac{\norm{2}{\tail{x}{k}}^2}{k} \right] \lesssim \frac{1}{\abs{S}^d}.
  \end{align*}
\end{corollary}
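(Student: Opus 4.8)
The plan is to derive this directly from the per-instance guarantee of Proposition~\ref{prp:sets} by amplifying it with the median-of-instances bound of Lemma~\ref{l:hack}. Write $\wh{x}^{(1)},\dots,\wh{x}^{(56d)}$ for the estimates produced by the $56d$ parallel runs of Count-Sketch, and set $X_j := \wh{x}^{(j)}_S - x_S \in \R^{|S|}$. Since the runs use independent hash functions, the vectors $X_1,\dots,X_{56d}$ are independent, and $\wh{x}_S - x_S$ is their coordinate-wise median; this is exactly the setup of Lemma~\ref{l:hack}.

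First I would apply Proposition~\ref{prp:sets} to each run with the constant $t$ replaced by $t/3$ --- legitimate since we may take $c$ large enough depending on $t/3$, i.e.\ on $t$ --- to get, for every $j$,
\[
\Pr\left[\norm{2}{X_j} < C\right] \;\ge\; 1-p, \qquad C^2 := \tfrac{t}{3}\cdot|S|\cdot\tfrac{1}{R}\cdot\tfrac{\norm{2}{\tail{x}{k}}^2}{k}, \qquad p \lesssim \frac{1}{|S|^{1/14}},
\]
where the non-strict inequality coming out of Proposition~\ref{prp:sets} is turned into a strict one by giving away an arbitrarily small amount of slack in the constant. Lemma~\ref{l:hack} with $r = 56d$ then gives $\norm{2}{\wh{x}_S - x_S} < C\sqrt{3}$, i.e.\ $\norm{2}{\wh{x}_S - x_S}^2 < 3C^2 = t\cdot|S|\cdot\tfrac{1}{R}\cdot\tfrac{\norm{2}{\tail{x}{k}}^2}{k}$, with probability at least $1 - (11p)^{r/4} = 1 - (11p)^{14d}$. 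Here the factor $3$ on the squared norm is exactly the square of the $\sqrt{3}$ in Lemma~\ref{l:medians} (which is why $t/3$ was budgeted), and the count $56 = 4\cdot 14$ is chosen so that $r/4 = 14d$. Writing $p \le A\,|S|^{-1/14}$, we get $(11p)^{14d} \le (11A)^{14d}\,|S|^{-d} \lesssim |S|^{-d}$, which is the asserted bound.

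The proof is essentially mechanical, and the only point that requires a moment's care --- the nearest thing to an obstacle --- is the regime in which $|S|$ is a small constant, where $11p$ may exceed $1$ and the estimate $(11p)^{14d}\le (11A)^{14d}|S|^{-d}$ becomes vacuous. This regime is harmless: whenever $|S| \le (11A)^{14}$ the target quantity $|S|^{-d}$ is bounded below by a fixed positive constant, so the asserted bound $\Pr[\cdots] \lesssim |S|^{-d}$ holds trivially for any probability; hence one only ever needs the computation above in the range $|S| > (11A)^{14}$, where $11p < 1$. The statement allows $c$ to depend on both $d$ and $t$, but the argument in fact uses only its dependence on $t$; allowing the extra dependence is harmless.
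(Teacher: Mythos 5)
Your proof is correct and takes essentially the same approach as the paper: apply Proposition~\ref{prp:sets} with constant $t/3$ to each of the $56d$ independent runs, then invoke Lemma~\ref{l:hack} so the failure probability drops to $(11p)^{r/4} = (11p)^{14d} \lesssim |S|^{-d}$. You additionally spell out two minor edge cases (strict versus non-strict inequality, and the regime where $|S|$ is small enough that $11p\geq 1$) that the paper's one-line proof elides.
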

\begin{proof}
  For $i \in [56d]$ let $\wh{x}_i$ denote the estimate from
  the $i$th instance of Count-Sketch.  Using Proposition~\ref{prp:sets}
  we can choose $c$ such that
  \begin{align*}
  \Pr\left[\norm{2}{(\wh{x}_{i})_S - x_S}^2 >
    \frac{t}{3} \cdot \abs{S} \cdot \frac{1}{R} 
    \cdot \frac{\norm{2}{\tail{x}{k}}^2}{k} \right] \lesssim
  \frac{1}{\abs{S}^{1/14}}.
  \end{align*}
   We now get the desired result by applying Lemma \ref{l:hack} to
   the random variables $\{(\wh{x}_i - x)_S\}_i$.
\end{proof}

We now conclude the section by showing that the bound in
Corollary~\ref{c:hack} applies to Count-Sketch itself.
The key is the following combinatorial observation, which
can be summarized as ``the median of the median-of-medians is the median!''

\begin{lemma}[Median${}^{\text{3}}$] \label{l:median3} Let
  $\{a_1,\dots,a_n\}$ be a list of $n=k \ell$ real numbers with $n$
  odd.  Consider the set $\Pi$ of all partitions $\pi =
  \{S_1,\dots,S_\ell\}$ of $[n]$ into blocks of size $k$.  Then
  \[
  \median_{\pi\in\Pi} \median_{b \in [\ell]} \median_{i \in S_b} a_i = \median_{i \in [n]} a_i.
  \]

\end{lemma}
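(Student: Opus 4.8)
The plan is to reduce to a symmetric situation and exploit an order-reversing involution of the index set. Write $Y_\pi := \median_{b \in [\ell]} \median_{i \in S_b} a_i$ for the inner double median attached to a partition $\pi = \{S_1,\dots,S_\ell\} \in \Pi$, so the claim is $\median_{\pi \in \Pi} Y_\pi = \median_{i\in[n]} a_i =: m$. First I would observe that $Y_\pi$ — and the position of $Y_\pi$ relative to $m$ — depends only on the relative order of the $a_i$; since $\Pi$ is the set of \emph{all} partitions of $[n]$ into size-$k$ blocks, it is invariant under relabeling indices, and every median commutes with the monotone rank map. Hence, replacing the $a_i$ by their ranks and relabeling, I may assume the $a_i$ are distinct and in fact $a_i = i$, so that $m = (n+1)/2$. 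Because $n = k\ell$ is odd, both $k$ and $\ell$ are odd, so each of the three medians in the statement is the genuine middle element of an odd-length list. It then suffices to show $m$ is the \emph{unique} median of the multiset $\{Y_\pi : \pi \in \Pi\}$, interpreting ``median'' in the weak sense of a value with at least half the mass on each side (this handles $\abs{\Pi}$ being even).

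The key step is the involution $\sigma\colon i \mapsto n+1-i$ of $[n]$, which acts on $\Pi$ by $\pi = \{S_1,\dots,S_\ell\} \mapsto \sigma(\pi) := \{\sigma(S_1),\dots,\sigma(S_\ell)\}$, a bijection of $\Pi$ onto itself. Since $x \mapsto n+1-x$ is an order-reversing bijection and each $S_b$ has odd size, it carries the middle element of $S_b$ to the middle element of $\sigma(S_b)$, so $\median_{i \in \sigma(S_b)} a_i = n+1 - \median_{i \in S_b} a_i$; applying the same observation once more to the outer median over the $\ell$ (odd many) blocks gives $Y_{\sigma(\pi)} = n+1 - Y_\pi = 2m - Y_\pi$. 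Thus the distribution of $Y_\pi$ over a uniformly random $\pi \in \Pi$ is symmetric about $m$; in particular $\abs{\{\pi : Y_\pi \le m\}} \ge \abs{\Pi}/2$ and $\abs{\{\pi : Y_\pi \ge m\}} \ge \abs{\Pi}/2$, so $m$ is a median of $\{Y_\pi\}$.

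To upgrade this to uniqueness I would exhibit a partition attaining $Y_\pi = m$: take the consecutive partition $S_b = \{(b-1)k+1,\dots,bk\}$, whose block medians are $(b-1)k + \tfrac{k+1}{2}$ for $b = 1,\dots,\ell$. These $\ell$ values form an arithmetic progression of common difference $k$, so their median is the middle term $\tfrac{\ell-1}{2}\,k + \tfrac{k+1}{2} = \tfrac{k\ell+1}{2} = m$. Since $m$ is therefore attained with positive probability and the distribution is symmetric about $m$, strictly fewer than half the $\pi$ have $Y_\pi < m$ and strictly fewer have $Y_\pi > m$, so no value other than $m$ can be a median. Hence $\median_{\pi\in\Pi} Y_\pi = m$.

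I expect the only genuinely delicate point to be the reduction in the first paragraph: one has to check carefully that passing to ranks and relabeling indices neither changes $\Pi$ nor the value of either side of the identity (using that $Y_\pi$ is always one of the input numbers and that medians commute with monotone maps), and one has to pin down the convention for the outer median since $\abs{\Pi}$ need not be odd. Everything after that is the short symmetry-plus-attainment argument above, with no estimation involved.
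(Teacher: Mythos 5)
Your proof is correct and takes the same approach as the paper's: both rely on the order-reversing involution $\sigma\colon i \mapsto n+1-i$ to show that the multiset $\{Y_\pi\}$ is symmetric about the overall median (the paper normalizes so that the values $a_i$ are symmetric about $0$, whereupon $\sigma$ becomes negation and both sides of the identity must equal their own negatives, hence vanish). Your extra step of exhibiting a partition with $Y_\pi = m$, to handle the fact that $\abs{\Pi}$ may be even, is a refinement the paper leaves implicit; the core symmetry argument is identical.
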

\begin{proof}
  As medians depends only on the relative orderings, without loss of
  generality we may assume that the set $\{a_i\}$ is symmetric about
  $0$ (e.g., take $a_i = -(n+1)/2 + i$).  Both sides of the
  desired equality are invariant under permutation of coordinates;
  hence they are both invariant under negation of the elements $a_i$
  and so are both zero.
\end{proof}

\begin{theorem} \label{thm:bettersets} Fix a constant $d$, and
  consider the estimate $\wh{x}$ of $x$ from Count-Sketch using $R$
  rows and $C = ck$ columns, for sufficiently large (depending on $d$)
  constant $c$.  Suppose $\log k \lesssim R$.  For any set $S \subset [n]$
  with $\abs{S} \leq k$,
  \begin{align*}
  \Pr\left[\norm{2}{\wh{x}_S - x_S}^2 > \abs{S} \cdot \frac{1}{R} 
    \cdot \frac{\norm{2}{\tail{x}{k}}^2}{k} \right] \lesssim
  \frac{1}{\abs{S}^d}.
  \end{align*}
\end{theorem}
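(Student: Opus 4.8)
The plan is to realize the direct median of Count-Sketch as a median over partitions of the median-of-medians estimators already controlled by Corollary~\ref{c:hack}, using the Median${}^3$ Lemma, and then to upgrade ``each individual partition is accurate with high probability'' to ``most partitions are accurate with high probability'' so that the deterministic Lemma~\ref{l:medians} can finish the job. Fix the target exponent $d$, let $m$ be the least odd integer with $m\ge 56d$, and (absorbing a bounded adjustment of $R$ into the constants) assume $R$ is odd with $m\mid R$; set $\rho=R/m$, so that $\rho$ and $m$ are odd, and note that $\log k\lesssim R$ gives $\log k\lesssim\rho$ since $m$ is a constant depending only on $d$. For a coordinate $i$ write $Z_{u,i}=s_u(i)\,y_{u,h_u(i)}$, so $\wh{x}_i=\median_{u\in[R]}Z_{u,i}$. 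For a partition $\pi=\{S_1,\dots,S_m\}$ of $[R]$ into blocks of size $\rho$, define $\wh{x}^{\pi}_i=\median_{b\in[m]}\median_{u\in S_b}Z_{u,i}$. Applying Lemma~\ref{l:median3} coordinatewise (with $a_u=Z_{u,i}$) yields $\median_{\pi}\wh{x}^{\pi}_i=\wh{x}_i$ for every $i$, hence
\[
\wh{x}_S-x_S=\median_{\pi}\bigl(\wh{x}^{\pi}_S-x_S\bigr).
\]

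Next I would bound each fixed $\wh{x}^{\pi}_S$. For fixed $\pi$ the $m$ inner medians run over disjoint blocks of rows, hence over independent hash functions, so $\wh{x}^{\pi}$ has exactly the law of the estimator of Corollary~\ref{c:hack} run with $\rho$ rows and $m$ parallel instances (the proof of that corollary is unaffected if the number of instances is any integer $\ge 56d$). Invoking it with exponent $d$ and the constant $t_0:=1/(3m)$ --- legitimate since the column constant $c$ is allowed to depend on $d$ and both $m$ and $t_0$ depend only on $d$ --- and using $1/\rho=m/R$, we obtain, for every fixed $\pi$,
\[
\Pr\!\left[\norm{2}{\wh{x}^{\pi}_S-x_S}^2>\tfrac13\abs{S}\cdot\tfrac1R\cdot\tfrac{\norm{2}{\tail{x}{k}}^2}{k}\right]\lesssim\frac1{\abs{S}^{\,d}}.
\]
Call $\pi$ \emph{bad} when this event holds for the realized hash functions.

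Now I would do the averaging step. By linearity of expectation the expected number of bad partitions is $\abs{\Pi}\cdot O(\abs{S}^{-d})$, so Markov's inequality shows that, with probability $1-O(\abs{S}^{-d})$, fewer than a quarter of all $\pi\in\Pi$ are bad. Condition on that event. Then at least a $3/4$ fraction of the vectors $\{\wh{x}^{\pi}_S-x_S\}_{\pi}$ have $\ell_2$ norm below $C:=\bigl(\tfrac13\abs{S}\cdot\tfrac1R\cdot\tfrac{\norm{2}{\tail{x}{k}}^2}{k}\bigr)^{1/2}$, so Lemma~\ref{l:medians} gives $\norm{2}{\median_{\pi}(\wh{x}^{\pi}_S-x_S)}<C\sqrt3$; by the displayed identity this is $\norm{2}{\wh{x}_S-x_S}$, hence $\norm{2}{\wh{x}_S-x_S}^2<3C^2=\abs{S}\cdot\tfrac1R\cdot\tfrac{\norm{2}{\tail{x}{k}}^2}{k}$. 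Combining with the $O(\abs{S}^{-d})$ probability that the conditioning fails yields the theorem.

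The genuinely load-bearing step is the averaging in the third paragraph: Corollary~\ref{c:hack} controls each partition only in isolation and these events are heavily correlated, but once the Median${}^3$ identity has traded the analytically awkward direct median for a median over the \emph{fixed} combinatorial family $\Pi$, it suffices that \emph{most} partitions be good, and that follows from a one-line expectation-and-Markov computation. Everything else --- arranging the parity and divisibility of $R$, carrying the factor $m$ between $1/\rho$ and $1/R$, and absorbing the $\sqrt3$ loss of Lemma~\ref{l:medians} through the choice $t_0=1/(3m)$ --- is routine bookkeeping.
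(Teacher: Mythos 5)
Your proof is correct and follows essentially the same route as the paper's: apply the Median${}^3$ Lemma to re-express $\wh{x}$ as the median over partitions of median-of-medians estimators, bound each fixed partition via Corollary~\ref{c:hack} with a suitable constant $t$, pass from ``each partition is good whp'' to ``most partitions are good whp'' by linearity and Markov, and finish with the deterministic Lemma~\ref{l:medians}. The only (cosmetic) difference is that you are explicit about the parity and divisibility assumptions needed by Lemma~\ref{l:median3} and about the choice $t_0=1/(3m)$, whereas the paper elides these bookkeeping points.
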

\begin{proof}
  Let $\pi$ be a partition of $[R]$ into $56d$ blocks of
  size $R/(56d)$ and let $\wh{x}_\pi$ denote the estimate obtained by
  running Count-Sketch separately on each block and then taking the
  median of the results (as in Corollary~\ref{c:hack}).  Define
\[B =  \abs{S} \cdot \frac{1}{R} \cdot \frac{\norm{2}{\tail{x}{k}}^2}{k}\]
  and let $E_\pi$ be the indicator for the event
  $\norm{2}{(\wh{x}_\pi - x)_S}^2 > \tfrac{1}{3}B$.
  Define
\[p = 4 \Pr[E_\pi = 1] = 4 \Pr \left [ \norm{2}{(\wh{x}_\pi - x)_S}^2 > \frac{1}{3} \cdot \abs{S} \cdot \frac{1}{R} \cdot \frac{\norm{2}{\tail{x}{k}}^2}{k} \right ].\]
 By Corollary~\ref{c:hack}, we can choose the constant $c$ so that
  $p \lesssim 1/\abs{S}^d$.

  This holds for any partition $\pi$.  Letting $N$ denote
  the number of such partitions, we have $\E[\sum_\pi E_\pi] \le Np/4$,
  and so $\Pr[\sum_\pi E_\pi > \tfrac{1}{4}N] < p$ by Markov's
  inequality.  Suppose now (as happens with at least $1-p$ probability)
  that $\sum_\pi E_\pi \le N/4$.  Then, letting $\widetilde x$ be the 
  coordinate-wise median of $\wh{x}_\pi$ over all partitions $\pi$,
  we have $\norm{2}{(\widetilde x - x)_S}^2 \le B$ by
  Lemma~\ref{l:medians}.  But
  $\widetilde x = \wh{x}$ by the Median$^{3}$ Lemma (Lemma~\ref{l:median3}).
  Putting this together, we have $\norm{2}{\wh{x}_S - x_S}^2 \le B$
  with probability at least $1-p$, which is exactly what we wanted.
\end{proof}

\section{Concentration for Compressible Signals}\label{sec:powerlaw}

One key application of Count-Sketch is to compute a table estimating
the largest $k$ coordinates of $x$~\cite{PDGQ}.  Some questions arise
about the proper metric for evaluating such estimates.  For continuous
distributions, distinguishing the $k$th and $(k+1)$st largest
coordinates is both difficult and not very important.  We choose to
measure the ``distance to validity,'' meaning the distance from $x$ to
the nearest $x'$ which has the same top $k$ coordinates as $\wh{x}$.
That is, if $H_k(x)$ denotes the restriction of $x$ to its $k$ largest
components, then we denote the ``top-$k$ estimation error'' of
$\wh{x}$ by
\begin{align}
  \label{e:KE}
  \KE(x, \wh{x}) := \min_{x' : H_k(x') = H_k(\wh{x})} \norm{2}{x - x'}.
\end{align}
The basic $\ell_{\infty}/\ell_2$ guarantee~\eqref{e:linfl2} gives that,
with $R = \Theta(\log n)$ and $C = O(k)$, Count-Sketch satisfies $\KE
\lesssim \norm{2}{x_{\overline{[k]}}}^2$.  By~\cite{PW11}, this is
optimal on worst-case inputs $x$.

However, real-world signals are not worst-case.  In fact, signals are
likely to be well approximated by power law or lognormal
distributions~\cite{M04,BKM+,CM05}, and sparsity is mainly useful
because such signals are, in fact, sparse~\cite{CRT06,BCDH}.

In this section we consider recovery of signals with suitable decay:
that is, signals where $\abs{x_k} - \abs{x_{2k}} \gtrsim
\norm{2}{\tail{x}{k}}/\sqrt{k}$.  This condition is satisfied by any
power law distribution $x_i \approx i^{-\alpha}$ with $\alpha > 0.5$, which
is the range of $\alpha$ for which the distribution is sparse (in
$\ell_2$); the condition is also satisfied by lognormal distributions
in the range for which they are sparse.

We show that, for such signals, $\KE \lesssim
\norm{2}{x_{\overline{[k]}}}^2 / R$ with high probability.  This gives
a factor of $R$ improvement over the standard result.  The idea is that
while Theorem~\ref{thm:bettersets} only applies to fixed sets of
indices, on such distributions the largest $k$ coordinates of $\wh{x}$
will, with high probability, be among the largest $2k$ coordinates of
$x$.  Hence we can apply Theorem~\ref{thm:bettersets} to that fixed
set of $2k$ coordinates.

\define{thm:l2}{Theorem}{%
  Suppose $\abs{x_k} - \abs{x_{2k}}
  \gtrsim \norm{2}{\tail{x}{k}}/\sqrt{k}$ and fix a constant $d$.  Let
  $\wh{x}$ be the result of Count-Sketch using $R \gtrsim \log n$ rows
  and $\Theta(k)$ columns, with fully random hash functions and
  constant factors depending on $d$.  Define $\KE$ as in~\eqref{e:KE}.  Then
  \[
  \KE(x, \wh{x}) \leq \frac{1}{R}\norm{2}{x_{\overline{[k]}}}^2
  \]
  with $1 - O(1/k^d)$ probability.
}
\declare{thm:l2}
\begin{proof}
  Let the number of columns be $ck$ for some constant $c$.  By the
  standard Count-Sketch bound we have with $1-n^{-\Theta(1)}$
  probability that $\norm{\infty}{\wh{x}-x}^2 <
  \norm{2}{\tail{x}{ck}}^2/(ck)$.  Then for sufficiently large $c$,
  \begin{align}
   \abs{\wh{x}_i} > \max(\abs{\wh{x}_j}, \abs{x_j})\label{e:12}
  \end{align}
  for all $i \in [k]$ and $j \in \overline{[2k]}$.

  Let $x'$ equal $\wh{x}$ over $[2k]$ and $x$ over $\overline{[2k]}$.
  Then by~\eqref{e:12} the top $k$ coordinates of $\wh{x}$ and of $x'$
  both lie among $[2k]$; since $x' = \wh{x}$ on this region, the top
  $k$ coordinates of the two are equal.  Hence
  \[
  \KE(x, \wh{x}) \leq \norm{2}{x - x'} = \norm{2}{\wh{x}_{[2k]} - x_{[2k]}}^2.
  \]
  But by Theorem~\ref{thm:bettersets},
  \[
  \norm{2}{\wh{x}_{[2k]} - x_{[2k]}}^2 \lesssim
  \frac{1}{R}\frac{2k}{ck}\norm{2}{\tail{x}{2k}}^2
  \]
  with probability at least $1 - O(1 / k^d)$.  Setting $c$ large
  enough gives the result.
\end{proof}

\section{Lower Bound on Point Queries}\label{sec:lower}

The following is an application of the proof technique
of~\cite{PW11}, using Gaussian channel capacity to bound the number of
measurements required for a given error tolerance.

\define{thm:lower}{Theorem}{
  For any $1 \leq t \leq \log(n/k)$ and any distribution on $O(Rk)$ linear
  measurements of $x \in \R^n$, there is some vector $x$ and
  index $i$ for which the estimate $\wh{x}$ of $x$ satisfies
  \[
  \Pr\left[(\wh{x}_i-x_i)^2 > \frac{t}{R}
  \frac{\norm{2}{\tail{x}{k}}^2}{k}\right] > e^{-\Omega(t)}.
  \]
}
\declare{thm:lower}
\begin{proof}
  Suppose without loss of generality that $n = k2^t$ (by ignoring
  indices outside $[k2^t]$) and that $t$ is larger than some constant.
  Partition $[n]$ into $k$ blocks of size $2^t$.  Set $x = y + w$,
  where $y \in \{0, 1, -1\}^n$ has a single random $\pm 1$ in each
  block (so it is $k$-sparse) and $w = N(0, \eps\frac{Rk}{nt}I_n)$ for
  some constant $\eps$ is i.i.d.\ Gaussian.

  Suppose that, in expectation over $x$, $A \in \R^{m \times n}$
  allows recovering $\wh{x}$ from $Ax$ with
  \begin{align}\label{e:goodestimate}
    (\wh{x}_i-x_i)^2 \leq \frac{t}{R} \frac{\norm{2}{\tail{x}{k}}^2}{k}
  \end{align}
  for more than a $1 - 2^{-2t}$ fraction of the coordinates $i$.
  We will show that such an $A$ must have $m
  \gtrsim Rk$ rows.  Yao's minimax principle then gives a lower bound
  for distributions on $A$.  With this, the inability to increase $t$ and $k$
  while preserving the number of rows gives the desired lower
  bound on failure probability.

  First, we show that $I(Ax; z) \gtrsim k t$.  Let $E$ be the event
  that \eqref{e:goodestimate} holds for more than a $1 - 2^{-t-2}$
  fraction of coordinates $i$ and that $\norm{2}{w}^2 < 2
  \E[\norm{2}{w}^2] = 2\eps Rk / t$.  $E$ holds with probability
  $1 - o(1) > 1/2$ probability over $x$.  Conditioned
  on $E$, we have
  \[
  (\wh{x}_i-x_i)^2 < 2\eps
  \]
  for a $1 - 2^{-t-2}$ fraction of the coordinates $i$.  Thus, for
  $\eps = 1/8$, if we round $\wh{x}_i$ to the nearest integer we
  recover $x^*$ with $x^*_i = z_i$ in a $1 - 2^{-t-2}$ fraction of the
  coordinates; hence $x^*_i = z_i$ over at least $3/4$ of the blocks.
  We know that $z$ has $(t+1)$ bits of entropy in each block.  This
  means, conditioned on $E$,
  \begin{align*}
    I(z; x^*) &= H(z) - H(z \mid x^*) \\&\geq k(t+1) - \log(\binom{k}{k/4}
    2^{(t+1)k/4}) \\&\geq k(t+1) - k(t+1)/4 - k\log(4e)/4 \gtrsim kt
  \end{align*}
  and hence $I(Ax; z \mid E=1) \gtrsim kt$
  by the data processing inequality.  But since $\Pr[E] \geq 1/2$,
  \begin{align}\label{e:Ilower}
    I(Ax; z) &\geq I(Ax; z \mid E) - H(E) \geq I(Ax; z \mid E=1) \Pr[E]
    - 1 
    \notag\\&
    \gtrsim kt.
  \end{align}
  Second, we show that $I(Ax; z) \lesssim mt/R$.  For each row
  $A_j$, $A_jx = A_jz + A_jw = A_jz + w'$ for $w' \sim N(0,
  \norm{2}{A_j}^2\eps Rk/(nt))$.  We also have $\E_z[(A_jz)^2] =
  \norm{2}{A}^2k/n$.  Hence $A_jx$ is an additive white Gaussian noise
  channel with signal-to-noise ratio
  \[
  \frac{\E[(A_jz)^2]}{\E[w'^2]} = \frac{t}{\eps R}.
  \]
  By the Shannon-Hartley Theorem, this channel has capacity
  \[
  I(A_jx; z) \leq \frac{1}{2}\log(1 + \frac{t}{\eps R}) <
  \frac{t}{\eps R} \lesssim t/R
  \]
  and thus, by linearity and independence of $w'$ (as in~\cite{PW11}),
  \begin{align}\label{e:Iupper}
    I(Ax; z) \lesssim mt/R
  \end{align}
  Combining~\eqref{e:Ilower} and \eqref{e:Iupper} gives $m \gtrsim
  Rk$.
\end{proof}

\section{Simulation}\label{sec:simul}

Theorems~\ref{thm:main} and~\ref{thm:l2} give asymptotic upper bounds
on the error of Count-Sketch estimates.  Theorem~\ref{thm:lower}
shows that there exists a distribution on inputs for which
Theorem~\ref{thm:main} gives the correct asymptotics.  However, this
does not show that the asymptotics are correct on common input
distributions, or that these asymptotics appear at practical input
sizes.

To address these questions, in this section we discuss empirical results demonstrating
that, on the most common model of input distributions,
\begin{itemize}
\item Theorems~\ref{thm:main} and~\ref{thm:l2} give the right asymptotics;
\item the constants involved are small; and
\item the estimates are better than those of Count-Min, an alternative
  estimation algorithm.
\end{itemize}

\subsection{Simulation Details}
We draw $x \in \R^n$ from the Pareto (Type I) distribution with
parameter $\alpha = 1.25$, chosen because Pareto distributions are common in large data sets and $\alpha \in [1, 1.5]$ is typical~\cite{CSN,M04}.
This distribution is given by
\[
\Pr[x_i > t] = (\mu/t)^\alpha
\]
independently for each $i$, where the scaling parameter
\[
\mu = n^{-1/\alpha}\sqrt{2/\alpha - 1}
\]
is chosen so that $\E[\norm{2}{x_{\overline{[k]}}}^2] \approx
k^{1-2/\alpha} = k^{-0.6}$.  (Note that, for $k \geq 10$ and large $n$, the
error in the approximation $\E[\norm{2}{x_{\overline{[k]}}}^2] \approx
k^{1-2/\alpha}$ is less than $1\%$.)

We then perform Count-Sketch with $R$ rows and $C$ columns, for
various $R$ and $C$, to get estimates $\wh{x}$ of $x$.  We will
analyze the distributions of point error and top-$k$ estimation error,
as distributions over $x$ and the Count-Sketch.  Point error is
defined as
\[
\PE = \abs{\wh{x}_i - x_i}
\]
for a random coordinate $i$.  For top-$k$ estimation error $\KE$, we use the
definition~\eqref{e:KE} from \textsection\ref{sec:powerlaw}.

We will study the behavior of $\KE$ and $\PE$ for large $n$ as a
function of $R$, $C$, and $k$, in order to empirically verify the
following specific claims.

\begin{itemize}
\item (Theorem~\ref{thm:main}) After removing $n2^{-\Omega(R)}$
  probability mass, the point estimation error $\abs{\wh{x}_i - x_i}$
  has expectation
  \[
  \E[\PE] \simeq \frac{\E[\norm{2}{x_{\overline{[C]}}}]}{\sqrt{RC}}  \simeq \frac{\E[\norm{2}{x_{\overline{[C]}}}^2]^{1/2}}{\sqrt{RC}} \approx
  \frac{1}{R^{.5}C^{.8}} =: m_{R,C}
  \]
  and decays like a Gaussian:
  \[
  \Pr[\abs{\wh{x}_i - x_i} > tm_{R,C}] \leq e^{-\Omega(t^2)}.
  \]
\captionsetup{margin=1em}
\begin{figure}[H]
  \centering \subfloat[Distribution of $\PE/m_{R,C}$ for various $R,
  C$.  Note that it is nearly independent of $R$ and
  $C$.]{\label{f:ped}
    \includegraphics[width=0.5\textwidth]{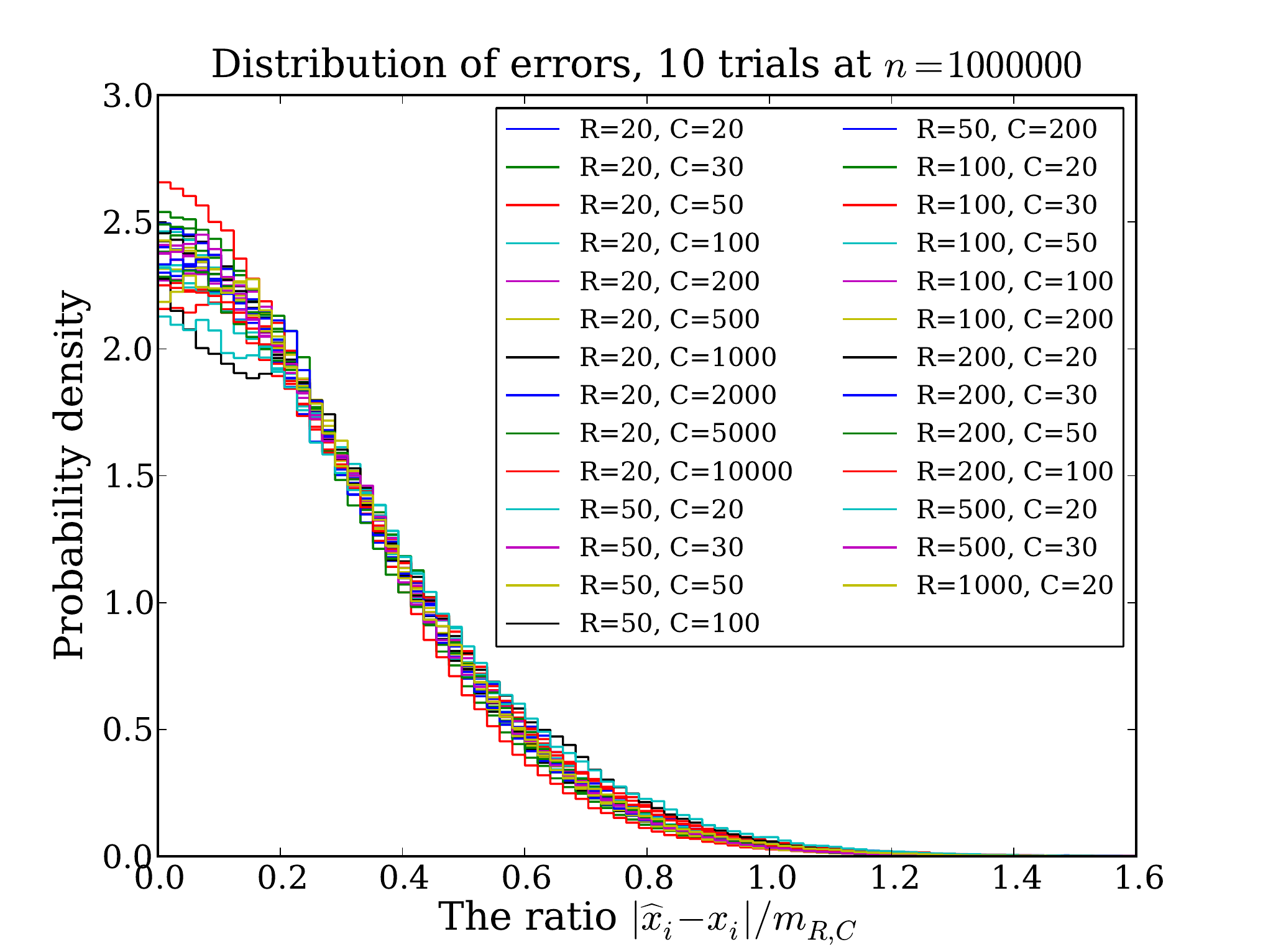}
  } \subfloat[Same as~(a), but with Count-Min added for comparison.
  Note that Count-Min has larger error than Count-Sketch.]{
    \label{f:pecm}
    \includegraphics[width=0.5\textwidth]{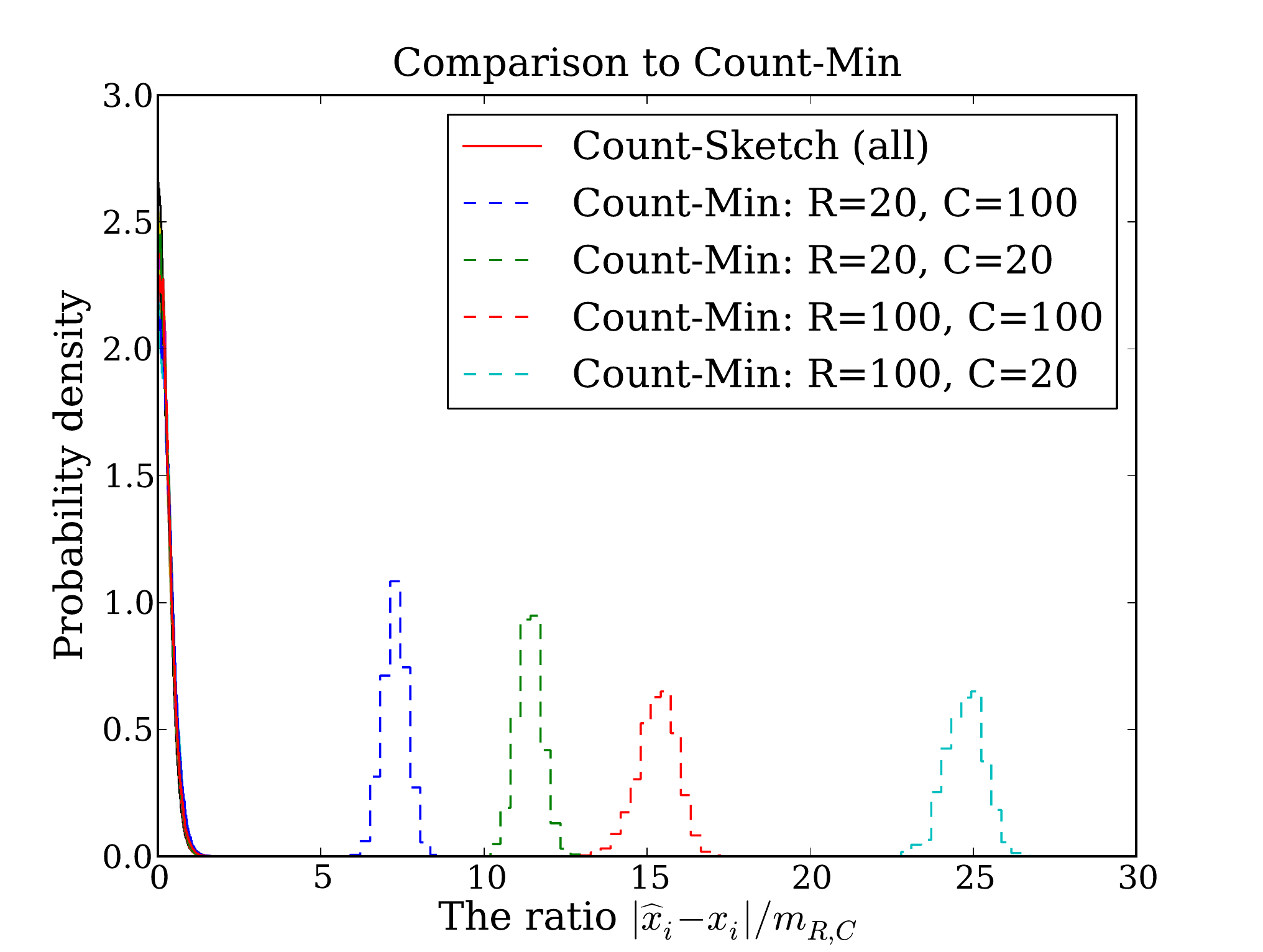}
  }
  \caption{Histograms of the point error $\PE$}
  \label{fig:PE}
\end{figure}
\item (Theorem~\ref{thm:l2}) After removing $n2^{-\Omega(R)}$
  probability mass, the top-$k$ estimation error has expectation
  \[
  \E[\KE] \simeq \sqrt{k}m_{R,C}.
  \]
  Furthermore, as $k$ increases, $\KE$ concentrates more strongly
  about its mean:
  \[
  \Pr[\KE > 2\E[\KE]] \lesssim 1/k.
  \]
\end{itemize}

Our results are presented in the form of a series of figures.

\begin{figure}[H]
  \centering

  \subfloat[Distribution of $\KE/(m_{R,C}\sqrt{k})$ for
  multiple $C$.  Once $C$ is large enough, the distribution is fairly
  static.]{
    \includegraphics[width=0.5\textwidth]{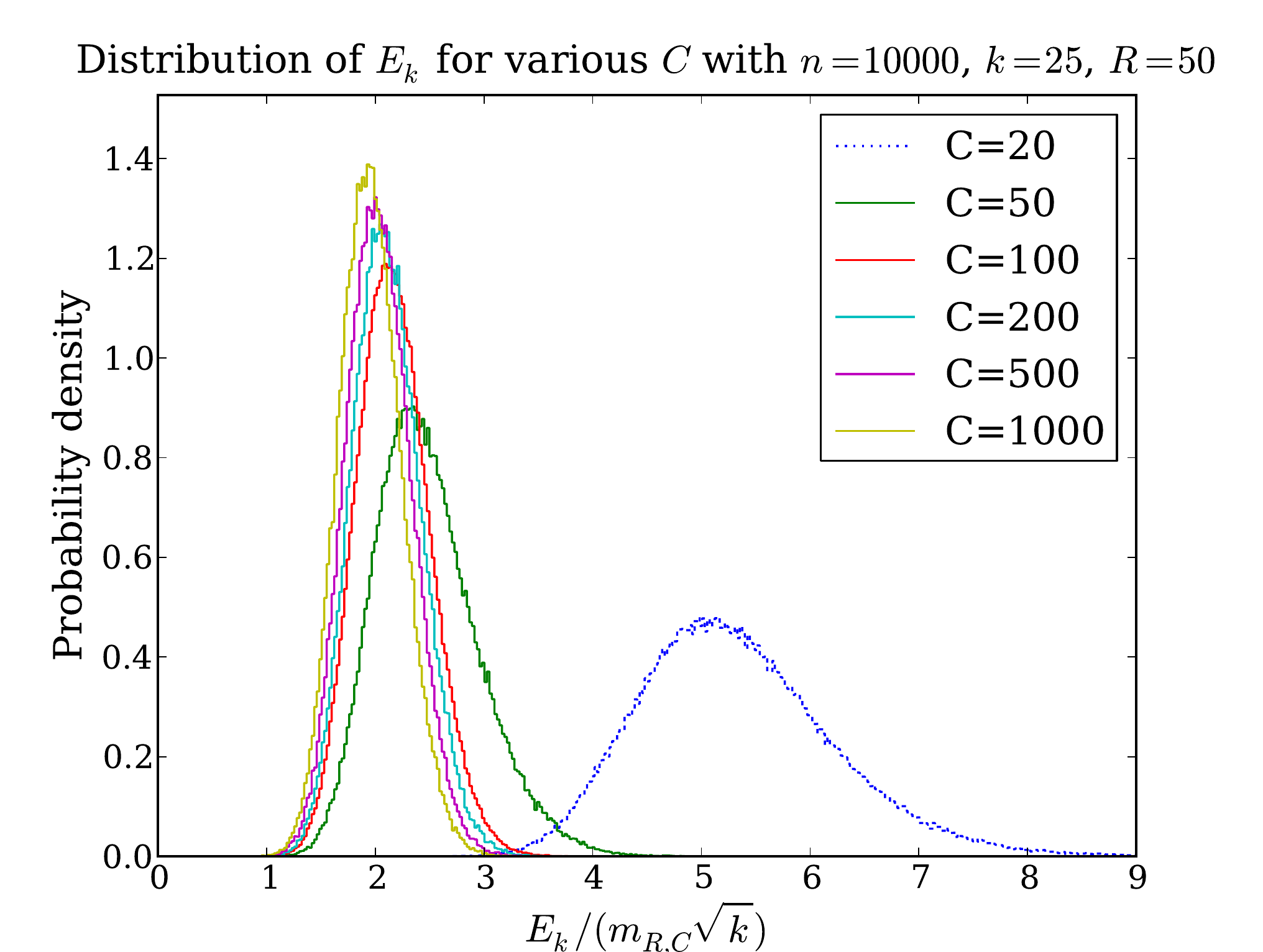}
  }
  \subfloat[Distribution of $\KE/(m_{R,C}\sqrt{k})$ for multiple $R$.
  Once $R$ is large enough, the distribution is fairly static.]{
    \includegraphics[width=0.5\textwidth]{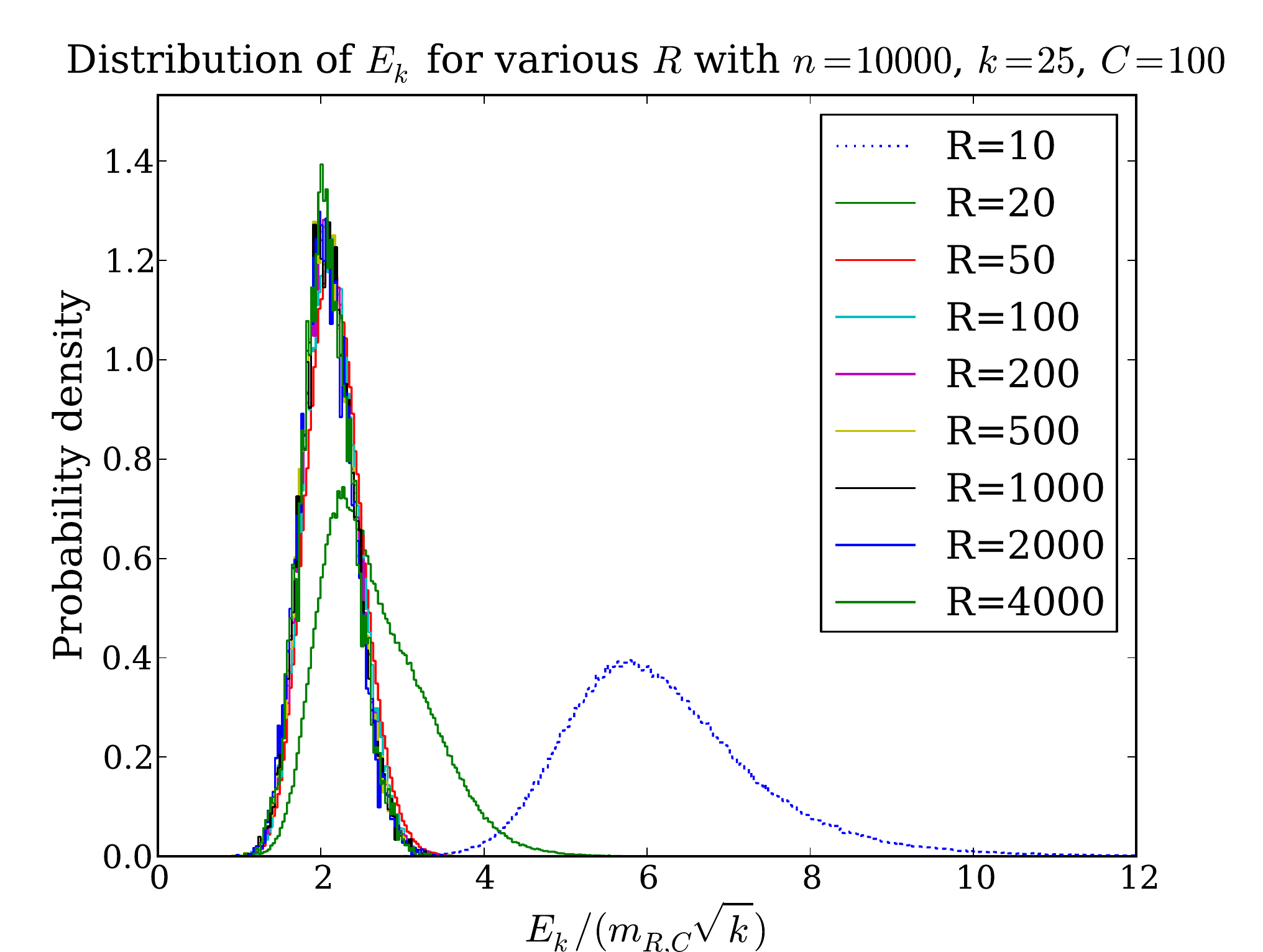}
  }\\
  \subfloat[{$\E[\KE]/(m_{R,C}\sqrt{k})$ as a function of $C$.  Above some
  threshold (depending on $R$), the mean remains fixed at a constant.}]{
    \includegraphics[width=0.5\textwidth]{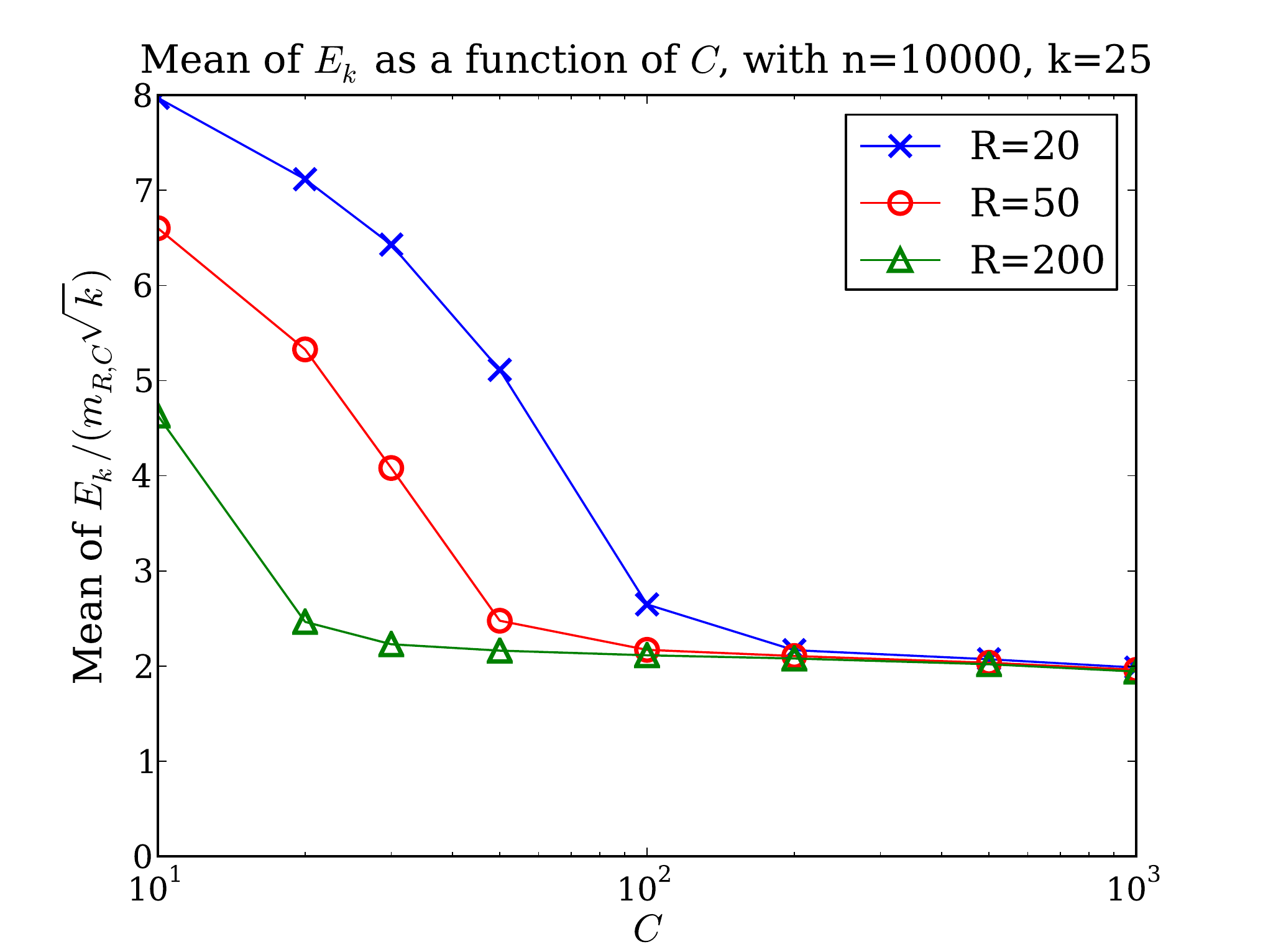}
  }
  \subfloat[{$\E[\KE]/(m_{R,C}\sqrt{k})$ as a function of $R$. Above some
  threshold (depending on $C$), the mean remains fixed at a constant.}]{
    \includegraphics[width=0.5\textwidth]{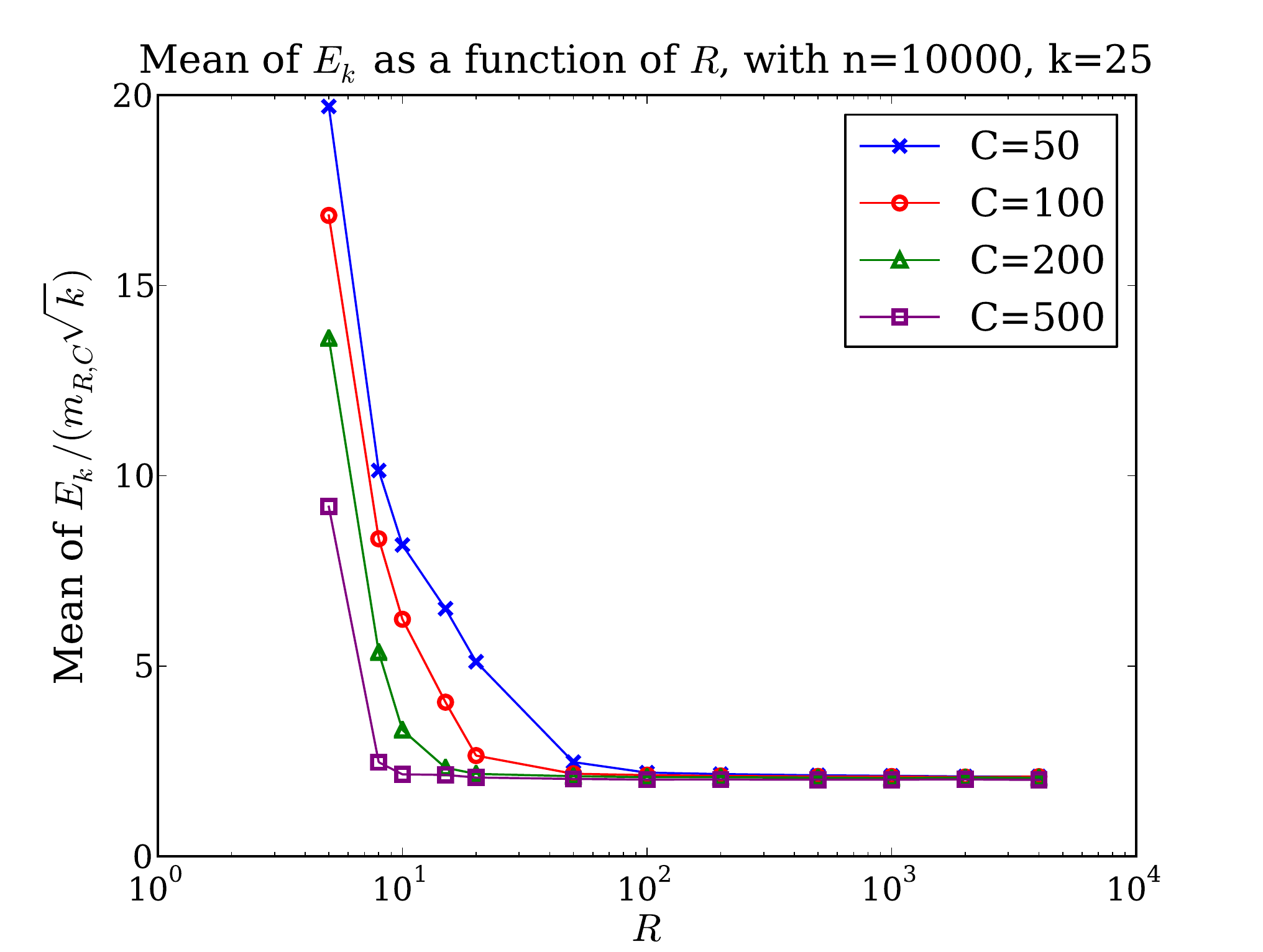}
  }
  \caption{Experimental results for the top-$k$ error $\KE$}
  \label{fig:KE}
\end{figure}
\begin{figure}[H]
  \centering

  \subfloat[Distribution of $\KE/(m_{R,C}\sqrt{k})$ for varying $k$.
  For small $k$, as $k$ increases the distribution becomes narrower
  and remains roughly in place.  For larger $k$, as $k$ increases the
  distribution instead shifts to the right.]{
    \includegraphics[width=0.5\textwidth]{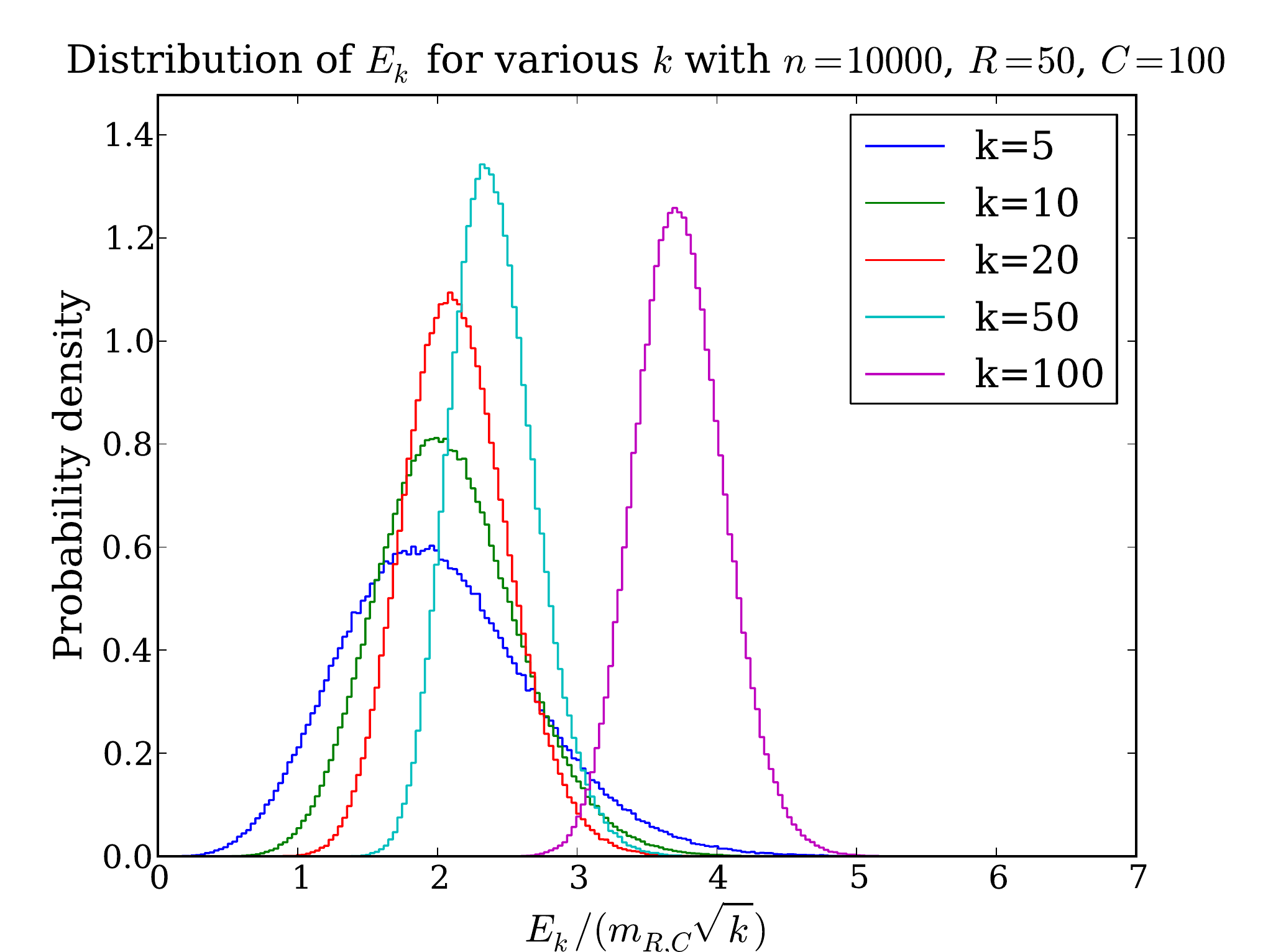}
  }
  \subfloat[The variance of {$\KE/\E[\KE]$}, as a
  distribution over $k$.  It appears to be approximated well by $0.6/k$.]{
    \includegraphics[width=0.5\textwidth]{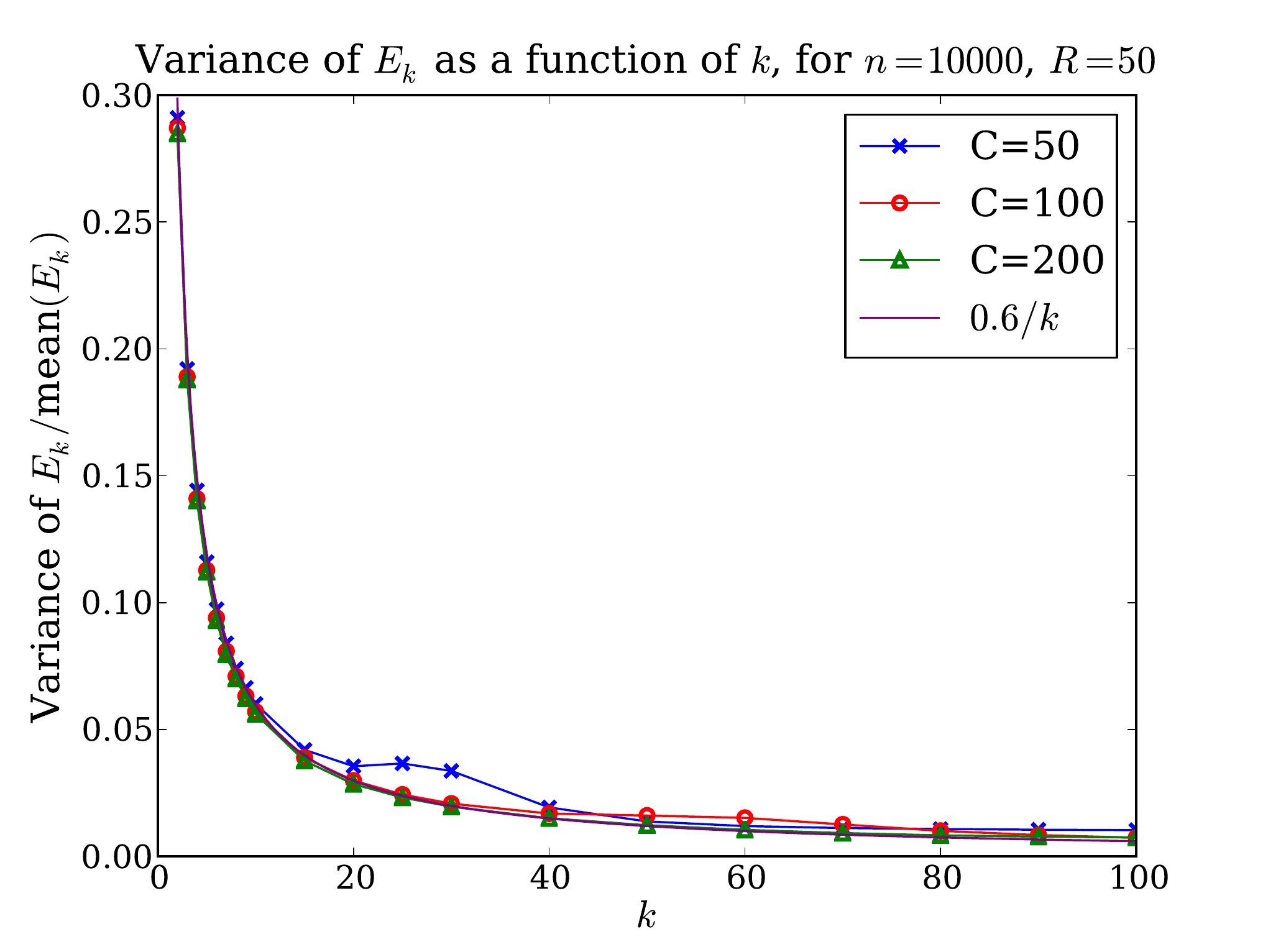}
  }
  \caption{Experimental results for the variation of the top-$k$ error $\KE$}
  \label{fig:KEvar}
\end{figure}

Figure~\ref{fig:PE}\subref{f:ped} shows the probability density
function of $\PE/m_{R,C}$ for $n=10^6$ and many different pairs $(R,
C)$.  We find that the PDFs all look fairly similar, and match a
Gaussian with constant standard deviation.  For comparison,
Figure~\ref{fig:PE}\subref{f:pecm} shows the equivalent error when
using the Count-Min sketch of Cormode and Muthukrishnan~\cite{CM04}.  We find that
Count-Min gives asymptotically higher error for the estimation of each
coordinate.

We study the distribution of $\KE$ in Figure~\ref{fig:KE}.
Figures~\ref{fig:KE}(a) and \ref{fig:KE}(b) give the probability
density functions of $\KE/(m_{R,C}\sqrt{k})$ for various $R$ and $C$,
respectively.  In both cases, we find that once $R$ and $C$ reach a
threshold, the distribution remains roughly constant---and has a
constant mean---as $R$ and $C$ increase beyond that
point. Figures~\ref{fig:KE}(c) and \ref{fig:KE}(d) show how $\E[\KE]$
changes as a function of $R$ and $C$, respectively.  As predicted, we
find that $\KE/(m_{R,C}\sqrt{k})$ has constant mean---so $\KE$ scales
as $m_{R,C}\sqrt{k}$---after $R$ and $C$ are sufficiently large.  The
threshold above which $\KE/(m_{R,C}\sqrt{k}) \approx 3$ allows some
trade-off between $R$ and $C$.  At $n = 10^4$ and $k = 25$, we observe
that $(R, C) = (26, 100) \approx (2\log_2 n, 4k)$ is above the
threshold.

In Figure~\ref{fig:KEvar}, we consider how well $\KE$ concentrates
about its mean as a function of $k$.  In \ref{fig:KEvar}(a), we plot
the PDF of $\KE/(m_{R,C}\sqrt{k})$ for various values of $k$.  We
observe that as $k$ increases, the distribution becomes more tightly
distributed about its mean.  However, once $k$ is large enough, our
chosen $(R, C)$ is no longer above the threshold for $(n, k)$, causing the
distribution of $\KE/(m_{R,C}\sqrt{k})$ to shift markedly to the right and
stop becoming more tightly distributed.  In \ref{fig:KEvar}(b), we plot the
variance of $\KE/\E[\KE]$, as a function of $k$.  We find that it is $\Theta(1/k)$, which gives
\[
\Pr[\KE > 2\E[\KE]] \lesssim 1/k.
\]
This is the analog of Theorem~\ref{thm:bettersets}.

\bibliographystyle{alpha}
\bibliography{paper}

\appendix

\section{Proofs for Concentration of Sets} \label{app:sets}

In this section our aim is to prove Proposition~\ref{prp:sets}.
Our approach is to study the pairwise correlations between
errors in coordinates.  We do this in two parts.
We first define a variant of Count-Sketch,
which we call \emph{tail-independent modified (TIM) Count-Sketch}.
In TIM Count-Sketch, the error coming from collisions with small elements
(the ``tail error") is replaced by independent, uniform noise.   We then focus on two
fixed coordinates and define a further variant,
\emph{fully-independent modified (FIM) Count-Sketch}.  In FIM Count-Sketch,
the errors for our two coordinates of interest are fully independent.
We define these variants in such a way that
unmodified, TIM, and FIM Count-Sketch can all be sampled with the
same randomness, so that they may be compared simultaneously.
We then bound the covariance between coordinate errors in
TIM Count-Sketch by bounding the difference between TIM and
FIM, and use the resulting bound on the error of sets in TIM Count-Sketch
to conclude a bound on unmodified Count-Sketch.

As in the statement of Proposition~\ref{prp:sets}, let $S \subset [n]$
be a subset of indices with $\abs{S} \le k$; this is the set on which we
will study concentration.  In the first part of the argument we will work
instead with $S' := S \cup [k]$, the set of interest together with the set of
heavy hitters.  Let $C = ck$ be the number of columns
in our sketch, for sufficiently large constant $c$.
Let $R \gtrsim \log k$ be the number of rows, and let $x$ be
the vector we are sketching.  Finally, define
$\mu = \norm{2}{x_{\overline{S'}}}/\sqrt{C} \le \norm{2}{\tail{x}{k}}/\sqrt{C}$.

\begin{observation} \label{obs:resamplebigger}
Let $X$ be a symmetric random variable and suppose $\eta,\sigma > 0$ are such that, for all $\epsilon \le 1$,
$\Pr[|X| < \epsilon \sigma] \ge \eta \epsilon$.
(For instance, if $X$ is a random variable to which Corollary~\ref{cor:gen} applies
and $\sigma^2 = \E[X^2]$, then one can check that $\eta=1/7$ suffices.)
Let $U$ be a symmetric random variable which is uniform on
$[-1,1]$ with probability $\eta$ and otherwise is $\pm \infty$.
Then $\Pr[|X| < \epsilon \sigma] \ge \Pr[\sigma |U| < \epsilon \sigma]$
for all $\epsilon$.  It follows that we can sample
$X$ and $U$ in such a way that they always have the
same sign and satisfy $|X| \le \sigma |U|$.
\end{observation}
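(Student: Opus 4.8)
The plan is to break the statement into three parts: identify the law of $\sigma\abs{U}$, verify the pointwise inequality $\Pr[\abs{X}<\epsilon\sigma]\ge\Pr[\sigma\abs{U}<\epsilon\sigma]$ directly from the hypothesis, and then read off the coupling via a standard quantile construction. First I would write out the distribution of $U$: with probability $\eta$ it is uniform on $[-1,1]$, so $\abs{U}$ is uniform on $[0,1]$, and with probability $1-\eta$ it is $+\infty$ or $-\infty$, each with probability $(1-\eta)/2$. Consequently $\Pr[\sigma\abs{U}<\epsilon\sigma]=\Pr[\abs{U}<\epsilon]$ equals $0$ for $\epsilon\le 0$, equals $\eta\epsilon$ for $0<\epsilon\le 1$, and equals $\eta$ for $\epsilon>1$.

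Next I would compare this with $\Pr[\abs{X}<\epsilon\sigma]$ case by case. For $\epsilon\le 0$ both probabilities are $0$. For $0<\epsilon\le 1$ the hypothesis gives exactly $\Pr[\abs{X}<\epsilon\sigma]\ge\eta\epsilon$. For $\epsilon>1$, the event $\{\abs{X}<\epsilon\sigma\}$ contains $\{\abs{X}<\sigma\}$, and the $\epsilon=1$ instance of the hypothesis gives $\Pr[\abs{X}<\sigma]\ge\eta$; hence $\Pr[\abs{X}<\epsilon\sigma]\ge\eta$. This establishes the displayed inequality for every $\epsilon$.

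Finally, the displayed inequality says precisely that $\sigma\abs{U}$ stochastically dominates $\abs{X}$: writing $F$ for the CDF of $\abs{X}$ and $G$ for the CDF of $\sigma\abs{U}$, it asserts $F\ge G$ pointwise, and therefore $F^{-1}\le G^{-1}$ for the generalized inverse quantile functions. I would then construct the coupling on a probability space carrying a variable $W$ uniform on $[0,1]$ and an independent Rademacher $\xi$, by setting $X:=\xi\,F^{-1}(W)$ and $U:=\xi\,G^{-1}(W)/\sigma$. Both variables are symmetric with the prescribed laws for $\abs{X}$ and $\abs{U}$, they carry the common sign $\xi$, and $\abs{X}=F^{-1}(W)\le G^{-1}(W)=\sigma\abs{U}$ surely. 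The argument is essentially routine; the only points worth a moment's care are the $\epsilon>1$ regime (handled by monotonicity) and checking that the quantile-coupled variables have the intended marginals once the random sign is reattached. As for the parenthetical claim that $\eta=1/7$ is admissible when Corollary~\ref{cor:gen} applies and $\sigma^2=\E[X^2]$, this is just a matter of tracking the constant through the proof of Lemma~\ref{lemma:gen}, which in fact yields $\Pr[\abs{X}<\epsilon\sigma]\ge\tfrac{2}{\pi}\sin^2(\epsilon/2)/\epsilon\ge\tfrac{1}{2\pi}(23/24)^2\epsilon>\epsilon/7$ for $\epsilon\le 1$.
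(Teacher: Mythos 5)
The paper states this as an unproved \emph{Observation}, so there is no internal proof to compare against; your argument is a correct and complete verification. The CDF computation for $U$, the case split at $\epsilon=1$ with monotonicity, and the quantile (inverse-transform) coupling with a common Rademacher sign $\xi$ are exactly the routine steps being implicitly invoked, and the constant-tracking through Lemma~\ref{lemma:gen} (giving the explicit bound $\tfrac{2}{\pi}\sin^2(\epsilon/2)/\epsilon \ge \tfrac{529}{1152\pi}\epsilon > \epsilon/7$ for $\epsilon\le 1$) correctly justifies the parenthetical claim that $\eta=1/7$ suffices.
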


\begin{lemma} \label{lma:maketailbigger}
As in a row of Count-Sketch, randomly assign to each
$i \in \overline{S'}$ a column $h(i)$ and a sign $s(i)$.
For each column $j$ let $T_j = \sum_{i \in \overline{S'} : h(i)=j} s(i) x_i$.
Let $L$ be a subset of $\ell \le 2k$ columns.  There exist
i.i.d.\ symmetric random variables $\{V_j : j \in L\}$ with
the following properties:
(i) $V_j$ is uniform on $[-\mu,\mu]$ with constant probability
and otherwise is $\pm \infty$;
(ii) $T_j$ and $V_j$ have the same sign; and
(iii) $\abs{T_j} \le \abs{V_j}$.
\end{lemma}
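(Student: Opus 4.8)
The plan is to build the variables $V_j$ one column at a time, applying Observation~\ref{obs:resamplebigger} at each step, and to show that the part of the hashing not yet revealed always leaves the relevant $T_{j_m}$ in the regime where that observation applies \emph{with the same target distribution}; this forces the $V_j$ to come out independent even though the $T_j$ are not.

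First I would record the structure of a single column. For fixed $j$ the contributions $X_{j,i} := s(i)x_i\,\mathbf{1}[h(i)=j]$, $i \in \overline{S'}$, are independent, symmetric, and zero with probability $1 - 1/C \ge 1/2$; since the signs are independent, $\E[T_j^2] = \sum_{i\in\overline{S'}} x_i^2/C = \mu^2$. By Corollary~\ref{cor:gen} (with the explicit constant noted in Observation~\ref{obs:resamplebigger}), $\Pr[\abs{T_j} < \epsilon\mu] \ge \epsilon/7$ for all $\epsilon \le 1$, so Observation~\ref{obs:resamplebigger} applies to $X = T_j$ with $\sigma = \mu$. The obstacle is that I want the $V_j$ \emph{independent}, whereas the $T_j$ are correlated (each coordinate of $\overline{S'}$ lands in exactly one column), so I cannot simply apply the observation column by column.

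To get around this I would couple sequentially. Enumerate $L = \{j_1,\dots,j_\ell\}$ and let $\mathcal{G}_m$ be the $\sigma$-field generated by the identities and signs of the coordinates hashed into $j_1,\dots,j_m$, together with the auxiliary randomness used to produce $V_{j_1},\dots,V_{j_m}$. Conditioned on $\mathcal{G}_{m-1}$, each coordinate not yet placed in $j_1,\dots,j_{m-1}$ is still uniform over the remaining $C - m + 1$ columns, so $T_{j_m}$ is again a sum of independent symmetric variables that are zero with probability $1 - 1/(C-m+1) \ge 1/2$, with conditional second moment at most $\norm{2}{x_{\overline{S'}}}^2/(C-m+1) \le \mu^2 \cdot C/(C-\ell) = O(\mu^2)$, using $\ell \le 2k$ and $C = ck$ for a large constant $c$. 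Corollary~\ref{cor:gen} applied conditionally, together with a rescaling by the (bounded) ratio of the conditional standard deviation to $\mu$, gives a constant $\eta > 0$ --- independent of $m$ and of $\mathcal{G}_{m-1}$ --- with $\Pr[\abs{T_{j_m}} < \epsilon\mu \mid \mathcal{G}_{m-1}] \ge \eta\epsilon$ for all $\epsilon \le 1$. Now I would apply Observation~\ref{obs:resamplebigger} in this conditional world with $X = T_{j_m}$, $\sigma = \mu$: using fresh randomness independent of $\mathcal{G}_{m-1}$, we obtain $V_{j_m}$ which is uniform on $[-\mu,\mu]$ with probability $\eta$ and otherwise $\pm\infty$, with $V_{j_m}$ and $T_{j_m}$ of the same sign and $\abs{T_{j_m}} \le \abs{V_{j_m}}$, giving (ii) and (iii). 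Since the law of $V_{j_m}$ given $\mathcal{G}_{m-1}$ is this one \emph{fixed} distribution no matter what $\mathcal{G}_{m-1}$ is, $V_{j_m}$ is independent of $\mathcal{G}_{m-1}$, hence of $V_{j_1},\dots,V_{j_{m-1}}$; by induction on $m$ the family $\{V_j : j \in L\}$ is i.i.d.\ with the stated distribution, which is (i).

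I expect the only delicate quantitative point to be the variance bookkeeping in the previous paragraph: deleting up to $\ell \le 2k$ of the $C = ck$ columns must inflate the per-column conditional second moment only by a bounded constant factor and keep each coordinate's ``zero'' probability at least $1/2$, which is exactly where the hypotheses $\abs{L} \le 2k$ and ``$c$ sufficiently large'' are used; everything else is a routine application of the two cited results.
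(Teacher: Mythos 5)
Your proof is correct, and it takes a genuinely different route from the paper's. The paper proves the lemma by a two-stage ``blow-up and resample'' argument: first it replaces the hash assignment with an equivalent one in which each coordinate's contribution is zero with probability exactly $1/2$, so that \emph{after conditioning on the column assignments} the $T_j$ become independent and Corollary~\ref{cor:gen} applies conditionally; then it handles the residual dependence, which sits entirely in the conditional variances $\sigma_j^2$, by a second resampling (allowing preliminary multi-assignments and correcting afterward) that produces i.i.d.\ dominating variances $\wh\sigma_j^2 \ge \sigma_j^2$ with $\E[\wh\sigma_j^2] = \Theta(\mu^2)$, to which one applies Markov's inequality and Observation~\ref{obs:resamplebigger} a second time. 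Your argument instead conditions sequentially on the filtration $\mathcal{G}_{m-1}$ generated by the columns revealed so far, verifies that the conditional law of $T_{j_m}$ is again a sum of independent symmetric variables that vanish with probability $\ge 1/2$ and has second moment $O(\mu^2)$ (using $\ell \le 2k$, $C = ck$, $c$ large), and then invokes Observation~\ref{obs:resamplebigger} conditionally with the \emph{fixed} scale $\sigma = \mu$ and a uniform constant $\eta$; independence of the $V_j$ is then the standard consequence of the conditional law of $V_{j_m}$ given $\mathcal{G}_{m-1}$ being a fixed distribution. Both arguments are sound. The paper's route avoids any filtration bookkeeping and produces concrete intermediate objects ($\wh\sigma_j$, $U_j$), whereas your route is conceptually shorter, skipping the two alternative-sampling mechanisms entirely, at the cost of having to verify that the conditional variance of $T_{j_m}$ stays $O(\mu^2)$ uniformly over $m$ and over the conditioning; you correctly identify that this last point is precisely where $\abs{L} \le 2k$ and the size of $c$ enter.
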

\begin{proof}
Let $p = 2\ell/C$, so that for each $i$ we have
$\Pr[h(i) \in L] = \tfrac{1}{2}p$.
Recalling that $C = ck$ for sufficiently large $c$, by
taking $c \ge 8$ we can ensure $p \le 1/2$.

Consider the following alternative procedure for sampling the random variables $\{T_j : j \in L\}$:
for each $i \in \overline{S'}$, decide with probability $p$ if $h(i) \in L$, and if so
(1) assign $h(i)$ uniformly to one of the columns $j$ in $L$ and
(2) add to $T_j$ a symmetric random variable which is $0$ with probability
$1/2$ and otherwise is $\pm x_i$.  (The variables $\{T_j : j \not\in L\}$ are not computed.)
In other words, for each $i \in \overline{S'}$ we double the probability of $h(i) \in L$,
but offset that by introducing a $1/2$ probability that $i$ contributes zero.
It is clear that this is equivalent to the original definition of $T_j$, so we henceforth work with it.

Now condition on the column assignments $h(i)$.  Having done so,
the variables $T_j$ are independent.  Moreover, each is a sum of independent,
symmetric random variables which are $0$ with probability $1/2$.
Thus Corollary \ref{cor:gen} applies.  Let $\sigma_j$ be the
standard deviation of $T_j$ and let $\{U_j : j \in L\}$ be a set of i.i.d.\ random
variables distributed like the variable $U$ in
Observation \ref{obs:resamplebigger}.  Then, by that observation,
we can sample $T_j$ so that $T_j$ and $U_j$ have the same
sign and $|T_j| \le \sigma_j |U_j|$.

Removing the conditioning on column assignments,
the dependence between columns manifests in the random variables
$\sigma_j^2 = \sum_{i \in \overline{S'} : h(i) = j} x_i^2/2$ for $j \in L$.  Consider
the following procedure for sampling these variances.
\begin{enumerate}
\item Let $q \in (0,1)$ be the solution
to the equation $(1-q)^\ell = 1-p$.
\item For each $i \in \overline{S'}$, determine preliminary column
assignments by, for each column $j \in L$, deciding independently with probability $q$ if $i$
is to be placed in column $j$.  These preliminary assignments may have repetitions
and may not assign $i$ to any column.
\item If a coordinate $i$ is assigned to just one column, let that be $h(i)$.
If it is assigned to more than one column, then randomly choose one of those columns
to be $h(i)$.  If it is assigned to no columns, then we set $h(i) \in \overline{L}$, i.e.,
we effectively just ignore $i$. 
\item Let $\sigma_j^2 = \sum_{i \in \overline{S'} : h(i) = j} x_i^2/2$.
\end{enumerate}
One can easily check that this is a valid way
of sampling.  The probability $q$ is the solution
to the equation $\log (1-p) = \ell \log (1-q)$;
since $\log (1-x) = \Theta(-x)$ for $\abs{x} \le 1/2$ it follows
that $q = \Theta(p/\ell) = \Theta(1/C)$.

Let $\wh\sigma_j^2$ be the variance that would have
been obtained from this procedure had we omitted
step 3, i.e., had we not corrected double-assignments.
Note that the random variables $\{\wh\sigma_j^2\}$ are
i.i.d.\ and they satisfy $\sigma_j^2 \le \wh\sigma_j^2$.
In particular, we have $|T_j| \le \wh\sigma_j |U_j|$.

The random variables $\{\wh\sigma_j^2 : j \in L\}$ have expected value
$\E[\wh\sigma_j^2] = (q/2) \sum_{i \in \overline{S'}} x_i^2 = \Theta(\mu^2)$.
Thus, for any $\epsilon \le 1$, by Markov's inequality we have
\[\Pr[\wh\sigma_j |U_j| \le \epsilon \mu] \ge \Pr\left[\wh\sigma^2_j < 4 \E[\wh\sigma_j^2]\right] \cdot \Pr\left[|U_j| \le \epsilon \cdot \frac{\mu}{2\sqrt{\E[\wh\sigma_j^2]}} \right] \gtrsim \epsilon.\]
Applying Observation~\ref{obs:resamplebigger} once more, we find
symmetric random variables $V_j$ which are uniform on $[-\mu,\mu]$
with constant probability, which have the same sign as $U_i$ (and thus
the same sign as $T_j$), and which always satisfy $|V_j| \ge
\wh\sigma_j |U_j| \ge |T_j|$.  These random variables have all the
desired properties.

Finally, we note that, as written, the random variables just constructed
depend on $\ell$, in that $q$ depends on $\ell$.  However, we can remove this
dependence by simply choosing the largest $q$ over all $\ell \le 2k$.  This gives
random variables which satisfy the same bounds but are agnostic about $\ell$.
\end{proof}

In a moment we will define the tail-independent modified (TIM) Count-Sketch.
The main point of TIM Count-Sketch is to replace the actual contributions of the ``tail'' coordinates
$\overline{[k]}$ with the independent, uniform random variables in
Lemma~\ref{lma:maketailbigger}.
There is one additional difference, though:
for later use, we invent a notion of ``ghost coordinates" for TIM Count-Sketch.
We do this for the following reason.  In TIM Count-Sketch, when there is a collision
between two coordinates in $S'$, we will assign them each a fixed error
instead of using the tail noise.  Later, when we modify TIM Count-Sketch to achieve
full pairwise independence, it will be convenient to have a larger probability of
using the fixed error than we get from just collisions between coordinates in $S'$.  
The right probability for our uses lies somewhere between that which you get from
considering collisions amongst $|S'|$ coordinates and that which you get from
considering collisions amongst $C+1$ coordinates.  To achieve
this intermediate probability, we fabricate $C+1-|S'|$ ``ghosts".
These are dummy coordinates whose only purpose is to (maybe) collide with
coordinates of $S'$ to force them to use the fixed error.  To allow us to tune
the probability of collision, each ghost may or may not be ``real", according to
i.i.d.\ Bernoulli random variables.  Thus the probability of a ghost colliding
with a fixed coordinate $i$ is the probability of that ghost being real times
the probability that it is assigned the same column as $i$.

We are now ready to give an actual definition.
Fix a bound $M > 0$ (later we will take $M = \mu$) and,
for each row, compute estimates as follows.
\begin{enumerate}
\item Assign signs $s(i)$ and columns $h(i)$ to the elements of $S'$.
\item Choose signs and columns for the elements of $\overline{S'}$
and, for the columns $j$ occupied by elements of $S'$, let
$T_j$ and $V_j$ be as in Lemma~\ref{lma:maketailbigger}.
\item Fabricate $C+1-|S'|$ ghost coordinates and, for each, decide
independently with probability $p_{\text{ghost}}$ if that ghost is real.
Random choose a column for each ghost that is real.
(The probability $p_{\text{ghost}}$ will be chosen later.)
\item For each coordinate $i \in S'$,
\begin{enumerate}
\item Let $H_i$ be the sum of $s(i) s(j) x_j$
over all $j \in S'$ with $j \ne i$ and $h(j) = h(i)$.
\item Let $\sigma \in \{\pm 1\}$ be the sign of $H_i + s(i) T_{h(i)}$
(which would be the error in unmodified Count-Sketch).
\item If $i$ is in the same column as $j$ for some $j \in S'$,
$j \ne i$ (i.e., if the sum defining $H_i$ is not empty) or
if $i$ is in the same column as a ghost, then return $x_i + \sigma \cdot M$
as the estimate.
\item Otherwise return $x_i + \sigma \cdot \min \{M, \abs{V_{h(i)}} \}$
as the estimate for $i$ in this row.
\end{enumerate}
\end{enumerate}
The final estimate for each coordinate $i \in S'$ is
the median of the estimates in each row.  (TIM Count-Sketch
only yields estimates for the coordinates in $S'$.)

The tail-independence modification can only worsen errors,
in the following sense.

\begin{observation} \label{obs:medianbigger}
Suppose $a = \{a_1,\dots,a_n\}$ and $b = \{b_1,\dots,b_n\}$ are sequences
such that, for each $i \in [n]$, $a_i$ and $b_i$ have the same
sign and satisfy either $|a_i| \le |b_i|$ or $|b_i| \ge M$.
Then $\median a_i$ and $\median b_i$ have the same sign and satisfy
either $\abs{\median a_i} \le \abs{\median b_i}$ or
$\abs{\median b_i} \ge M$.
\end{observation}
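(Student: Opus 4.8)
The plan is to peel off the escape clause $|b_i| \ge M$ by a truncation, reducing the claim to the standard fact that medians are monotone and sign-preserving under componentwise comparison. Concretely, I would first replace $b$ by the sequence $\tilde b$ defined by $\tilde b_i = b_i$ when $|b_i| < M$ and $\tilde b_i = (\mathrm{sign}\, b_i)\cdot\infty$ when $|b_i| \ge M$ (allowing the symbolic values $\pm\infty$; we take the median of an even-length list to be the lower of its two central values, so that all medians are unambiguous --- for odd length this is the usual median). By the hypothesis, for every $i$ the values $a_i$ and $\tilde b_i$ have the same sign and satisfy $|a_i| \le |\tilde b_i|$: when $|b_i| < M$ this is the first clause of the hypothesis, and when $|b_i| \ge M$ it holds because $|\tilde b_i| = \infty$ while $a_i$ has the sign of $b_i$, i.e.\ the sign of $\tilde b_i$. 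The same holds with $a$ replaced by $b$.

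The core step is the following sublemma: if $u_i$ and $v_i$ have the same sign and $|u_i| \le |v_i|$ for all $i$ (with $v_i$ possibly $\pm\infty$), then $\median u_i$ and $\median v_i$ have the same sign and $|\median u_i| \le |\median v_i|$. I would prove this directly from the defining property of the median. Negating everything if necessary, set $m := \median v_i \ge 0$. At least half of the $v_i$ satisfy $v_i \le m$, and for each such $i$ one has $u_i \le m$: if $v_i \ge 0$ then $0 \le u_i \le v_i \le m$, and if $v_i < 0$ then $u_i \le 0 \le m$. Hence at least half of the $u_i$ satisfy $u_i \le m$, so $\median u_i \le m$. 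Dually, at least half of the $v_i$ satisfy $v_i \ge m \ge 0$, so the corresponding $u_i$ are $\ge 0$, giving $\median u_i \ge 0 \ge -m$; thus $|\median u_i| \le m = |\median v_i|$. For the signs: if $m = 0$ these bounds force $\median u_i = 0$, and if $m > 0$ then more than half of the $v_i$ are strictly positive, hence so are more than half of the $u_i$, which forces $\median u_i > 0$.

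Applying the sublemma to the pairs $(a,\tilde b)$ and $(b,\tilde b)$ shows that $\median a_i$, $\median \tilde b_i$ and $\median b_i$ all have the same sign and that $|\median a_i| \le |\median \tilde b_i|$. It then remains only to compare $\median \tilde b_i$ with $\median b_i$. If $|\median b_i| \ge M$ we are done, since the conclusion allows this alternative. If $|\median b_i| < M$, then truncation cannot have moved the median: any $b_j$ lying on the upper side of $\median b_i$ with $|b_j| \ge M$ in fact satisfies $b_j \ge M > \median b_i$ and stays above it after being sent to $+\infty$, and symmetrically on the lower side, so $\median \tilde b_i = \median b_i$ and $|\median a_i| \le |\median b_i|$, which is the first alternative. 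The only genuinely fiddly points are this truncation-invariance check and the bookkeeping for the sign claim; the rest is the routine monotonicity of medians, so I expect no real obstacle, just care.
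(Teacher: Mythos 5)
The paper states Observation~\ref{obs:medianbigger} without any proof, treating it as self-evident, so there is no official argument to compare against. Your proof supplies the missing details and is essentially correct. The key ideas---truncating the escape clause $|b_i|\ge M$ to $\pm\infty$ to obtain a clean componentwise domination $|a_i|\le|\tilde b_i|$ (and $|b_i|\le|\tilde b_i|$), applying a monotonicity-of-medians sublemma to the pairs $(a,\tilde b)$ and $(b,\tilde b)$, and then observing that when $|\median b_i|<M$ the truncation does not move the median---are a natural and complete way to fill the gap. The truncation-invariance check is sound: every $b_j$ with $|b_j|\ge M$ lies on the same side of $\median b$ before and after being sent to $\pm\infty$, so the rank structure around the median is preserved.

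Two small points of rigor deserve attention, both arising only when the sequence length $n$ is even (i.e., an even number of Count-Sketch rows), which is typically avoided in practice. First, the lower-central-value convention you adopt is not symmetric under negation, so the ``negating everything if necessary'' reduction to $m\ge 0$ does not literally preserve the convention; it is cleaner either to treat $m<0$ directly by the same two-sided counting (which goes through) or to observe that the lower-central median has at least $\lceil n/2\rceil$ values $\le m$ and at least $\lfloor n/2\rfloor+1$ values $\ge m$, and use both counts explicitly. Second, and related: in the step ``at least half of the $v_i$ satisfy $v_i\ge m$, so the corresponding $u_i$ are $\ge 0$, giving $\median u_i\ge 0$,'' having exactly $n/2$ of the $u_i$ nonnegative would \emph{not} force the lower-central median to be $\ge 0$; what saves you is that the count is actually at least $n/2+1$. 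Your sign argument (``more than half of the $v_i$ are strictly positive'') already uses the strict count, so the fix is just to use it consistently. Neither issue is a substantive gap---the conclusion holds under the stated convention---but both are places where a reader might object unless the ``at least half'' is sharpened to ``strictly more than half'' where needed, or $n$ is assumed odd (as is standard for median-based estimators and would dissolve the issue entirely).
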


\define{lma:tics}{Lemma}{Let $\wh{x}_{\text{tim}}$ be the estimate of $x$ using
TIM Count-Sketch and let $\wh{x}_{\text{um}}$ be the estimate using unmodified
Count-Sketch.  For any subset $A \subset (-M,M)^{|S|}$ which is convex
and symmetric in each coordinate,
\[\Pr[(\wh{x}_{\text{um}} - x)_{S} \in A] \ge \Pr[(\wh{x}_{\text{tim}} - x)_{S} \in A].\]}
\declare{lma:tics}
\begin{proof}
Note that unmodified Count-Sketch can be run simultaneously
with TIM Count-Sketch, using the same randomness.
Consider a fixed row and a fixed coordinate $i \in S$.
Keeping the notation above, the error $E_{\text{tim}}$ arising from
TIM Count-Sketch is $\sigma \cdot M$ if there is a
collision or $\sigma \cdot \min\{M,|V_{h(i)}|\}$ if not.  The error
$E_{\text{um}}$ arising from unmodified Count-Sketch is $H_i + s(i) T_{h(i)}$.
Clearly $E_{\text{tim}}$ and $E_{\text{um}}$ always
have the same sign and satisfy either
$\abs{E_{\text{um}}} \le \abs{E_{\text{tim}}}$ or
$\abs{E_{\text{tim}}} \ge M$.
The final errors in the estimates are medians of these row errors.
Thus, by Observation \ref{obs:medianbigger},
\[\abs{(\wh{x}_{\text{um}} - x)_i} \le \abs{(\wh{x}_{\text{tim}} - x)_i}\]
or else the TIM error is at least $M$.
In other words, given this method of sampling, whenever
the error for TIM Count-Sketch is less than $M$ we know
that the error for unmodified Count-Sketch is no bigger
than the error for TIM Count-Sketch.  This clearly proves
what we wanted.
\end{proof}

Now that we have arranged for independence of the tail
contributions, the only remaining dependence arises from collisions
amongst the elements of $S'$.  Fix two coordinates $i_1,i_2 \in S'$;
we will bound the correlation between the errors in
these two coordinates.  Analogously to our analysis of
$\sigma_j$ in Lemma~\ref{lma:maketailbigger}, we can highlight
the dependence by first pretending collisions are independent
and then correcting double-collisions.  More precisely,
consider the following alternative mechanism for determining
collisions.
\begin{enumerate}
\item  Let $\mathfrak{p} \colon [0,\tfrac{1}{2}] \rightarrow [0,1]$
denote the inverse of the monotone-increasing function $p \mapsto p/(1+p)$.
\item For starters, declare that $i_1$ and $i_2$ do not collide.  (This may change later
in the procedure.)
\item For each element $j$ of $S' \setminus \{i_1,i_2\}$ (resp., each ghost)
and each of $i=i_1,i_2$, independently decide with
probability $\mathfrak{p}(1/C)$ (resp., $\mathfrak{p}(p_{\text{ghost}}/C)$)
if $j$ collides with $i$.  Note that, because these decisions are independent,
there may well be double-collisions at this stage.
\item If any ghost or coordinate in $S'$ collides with both
$i_1$ and $i_2$, then resample everything according to
the correct distribution, conditioned on $i_1$ and $i_2$ colliding.
\end{enumerate}

Using this procedure, the event that $i_1$ and $i_2$ end up not colliding is
the same as the event that step 3 produced no double-collisions.  The
probability of this is
\[p_{\text{nc}}(p_{\text{ghost}}) := \Pr[\text{$i_1,i_2$ do not collide}] = (1-\mathfrak{p}(1/C)^2)^{|S'|-2}(1-\mathfrak{p}(p_{\text{ghost}}/C)^2)^{C+1-|S'|}.\]
This is a monotone-decreasing function of $p_{\text{ghost}}$.  Noting that
$\mathfrak{p}(1/C) = 1/(C-1)$, we see that
\begin{equation*}
p_{\text{nc}}(0) = \left ( 1 - \frac{1}{(C-1)^2} \right )^{|S'|-2} \ge \left ( 1 - \frac{1}{(C-1)^2} \right )^{2k} = 1 - \frac{1}{C} \cdot \Theta \left ( \frac{k}{C} \right )
\end{equation*}
and
\begin{equation*}
p_{\text{nc}}(1) = \left(1-\frac{1}{(C-1)^2}\right)^{C-1}.
\end{equation*}
By taking $C$ to be a suitably large multiple of $k$ we can
arrange for $p_{\text{nc}}(0)$ to be at least $1-1/C$.
By a simple calculus exercise, $p_{\text{nc}}(1) \le 1-1/C$.
Thus there is a unique $q \in [0,1]$ such that
$p_{\text{nc}}(q) = 1-1/C$.  We now and henceforth set $p_{\text{ghost}}$ to
this value of $q$.

This is supposed to be an alternative, but equivalent,
method for determining collisions.  Before continuing, let
us check that it is indeed equivalent.  Our choice of
$p_{\text{ghost}}$ guaranteed that the new mechanism has the
right probability of $i_1$ and $i_2$ colliding; moreover,
when they do collide, we explicitly sample according to
the correct distribution.  Thus, to demonstrate equivalence,
we need only to consider the case when $i_1$ and $i_2$ do not
collide.  Condition on this event and consider the
(conditional) probability of some other coordinate
$j \in S' \setminus \{i_1,i_2\}$ colliding
with $i_1$.  Using the original sampling mechanism, this probability
is $1/C$.  Using our alternative sampling mechanism, the probability is
\begin{align*}
& \Pr[\text{$j$ collides with $i_1$} \mid \text{$i_1,i_2$ do not collide}] \\
= &\Pr[\text{$j$ collides with $i_1$ in step $3$} \mid \text{there are no double-collisions in step 3}].
\end{align*}
The event that there is no double-collision is the intersection
of independent events for each element of $S' \setminus \{i_1,i_2\}$
and for each ghost.  Of these constituent events, only one is relevant
to the conditional probability we want to compute: the event that $j$
does not double-collide.  Thus our probability is
\begin{align*}
& \Pr[\text{$j$ collides with $i_1$ in step $3$} \mid \text{$j$ does not collide with both $i_1$ and $i_2$ in step 3}] \\
= & \frac{\mathfrak{p}(1/C) - \mathfrak{p}(1/C)^2}{1 - \mathfrak{p}(1/C)^2} = \frac{\mathfrak{p}(1/C)}{1 + \mathfrak{p}(1/C)} = \frac{1}{C},
\end{align*}
which is what we wanted.
One can similarly check that the probability of a ghost
collision is correct.  Thus, as claimed, our new scheme is
a valid way to sample the collision events.

We can now define our last variant of Count-Sketch,
fully-independent modified (FIM) Count-Sketch.  This only produces estimates for the two
coordinates $i_1$ and $i_2$.   For each row, the FIM
estimate is computed as follows.  We always specify
that $i_1$ and $i_2$ will not collide.
To determine which elements of $S' \setminus \{i_1,i_2\}$
and which ghosts collide with $i_1$ and $i_2$, we use
(1--3) of the ``alternative mechanism'' above.  We
omit step 4, so that a given coordinate may collide
with both $i_1$ and $i_2$.  Then, for each $i \in \{i_1,i_2\}$
and for each coordinate colliding with $i$, we choose a
random sign.  (In particular, if some coordinate is supposed to
collide with both $i_1$ and $i_2$, then it is associated with
two different, independent random signs.)  Using
these collision and sign data, we proceed
as in (4a--4d) of the description of TIM
Count-Sketch to get a row estimate.  As always, the final estimate is the median
of the row estimates.

In each row, the estimates for $i_1$ and $i_2$ produced
by FIM Count-Sketch are independent; thus the final
estimates are also independent.

FIM and, to a lesser extent, TIM Count-Sketch are quite a bit different
from unmodified Count-Sketch.
However, for a single coordinate they preserve many of the salient features.
In particular, all of the properties used in the proof of Theorem~\ref{thm:main}
still hold: rows are independent, the errors in each row are symmetric, and
in each row we have (i) with constant probability, there is no collision with
$[k]$ and (ii) the error arising from collisions with $\overline{[k]}$ satisfies
the bound of Corollary~\ref{cor:gen}.  Thus, with the same proof as Theorem~\ref{thm:main},
we have the following bounds.

\begin{lemma} \label{lma:fimbound}
Fix $i \in \{i_1,i_2\}$ and consider the estimates $(\wh{x}_{\text{tim}})_i$
and $(\wh{x}_{\text{fim}})_i$ of $x_i$ from
TIM and FIM Count-Sketch, respectively.  For any $t \le R$ we have
 \[
  \Pr\left[((\wh{x}_{\text{tim}})_i - x_i)^2 > \frac{t}{R} \cdot \frac{\norm{2}{\tail{x}{k}}^2}{k}\right] < 2e^{-\Omega(t)}
\quad\text{and}\quad
  \Pr\left[((\wh{x}_{\text{fim}})_i - x_i)^2 > \frac{t}{R} \cdot \frac{\norm{2}{\tail{x}{k}}^2}{k}\right] < 2e^{-\Omega(t)}.
\]
\end{lemma}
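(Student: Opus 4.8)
The plan is to copy the proof of Theorem~\ref{thm:main} almost verbatim: TIM and FIM Count-Sketch were built to retain the three features that proof used, namely that the per-row errors for a fixed coordinate are independent across rows, each is symmetric, and in each row there is a constant probability that the ``collision'' contribution disappears, in which case what remains is the uniform tail noise, which has magnitude below $\epsilon\mu$ with probability $\gtrsim\epsilon$ for every $\epsilon\le1$.

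Concretely, fix $i\in\{i_1,i_2\}$ and let $E_u$ be the error of row $u$'s estimate of $x_i$; the final error is $\median_u E_u$, and the $E_u$ are independent because distinct rows use independent randomness. In row $u$, $E_u$ equals $\sigma\cdot M$ if $i$ shares its column with some element of $S'\setminus\{i\}$ or with a real ghost, and $\sigma\cdot\min\{M,\abs{V_{h(i)}}\}$ otherwise; unwinding step~4b of the TIM construction, $\sigma$ is $s(i)$ times a factor not involving $s(i)$, so since $s(i)$ is an independent uniform sign, $E_u$ is symmetric. I would then establish the per-row estimate $\Pr[\abs{E_u}<\epsilon\cdot\norm{2}{\tail{x}{k}}/\sqrt{k}]\gtrsim\epsilon$ for $\epsilon\le1$ by: (i) lower-bounding by a positive constant the probability that $i$ has no $S'$- or ghost-collision --- in TIM this probability is $(1-1/C)^{\abs{S'}-1}(1-p_{\text{ghost}}/C)^{C+1-\abs{S'}}$, and in FIM it is the analogous product with $\mathfrak p(1/C)=1/(C-1)$ and $\mathfrak p(p_{\text{ghost}}/C)\le1/(C-1)$ in place of $1/C$ and $p_{\text{ghost}}/C$, both $\Omega(1)$ since $\abs{S'}\le2k$, $C=ck$ with $c$ a large constant, and $p_{\text{ghost}}\le1$; and (ii) noting that on that event $\abs{E_u}=\min\{M,\abs{V_{h(i)}}\}\le\abs{V_{h(i)}}$ while $V_{h(i)}$ is uniform on $[-\mu,\mu]$ with constant probability (Lemma~\ref{lma:maketailbigger}), so $\Pr[\abs{V_{h(i)}}<\epsilon\mu]=\Omega(\epsilon)$, these two events being independent once one conditions on $h(i)$ (the collision structure depends on the $S'$/ghost columns, the tail noise on the tail). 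Finally $\mu\le\norm{2}{\tail{x}{k}}/\sqrt{C}\le\norm{2}{\tail{x}{k}}/\sqrt{k}$ turns the bound into the stated form.

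With those three facts I would conclude exactly as in Theorem~\ref{thm:main}: apply Lemma~\ref{lemma:medians} to the $R$ independent symmetric variables $E_u$ with $x=\epsilon\cdot\norm{2}{\tail{x}{k}}/\sqrt{k}$ and $p=\Omega(\epsilon)$, getting $\Pr[\abs{\median_u E_u}\ge\epsilon\cdot\norm{2}{\tail{x}{k}}/\sqrt{k}]<2e^{-\Omega(R\epsilon^2)}$, and then set $\epsilon=\sqrt{t/R}$, which is at most $1$ since $t\le R$. The identical computation covers both TIM and FIM.

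I do not anticipate a genuine obstacle --- this is a ``same proof'' statement --- so the one place needing care is the bookkeeping in step (i): checking that neither the fabricated ghosts nor, for FIM, the resample-on-double-collision mechanism pushes the per-row collision probability up to $1/2$. That reduces to the estimates $p_{\text{ghost}}\le1$ and $(C+1-\abs{S'})\cdot\mathfrak p(p_{\text{ghost}}/C)=O(1)$, which are immediate from the construction; once they hold, a minority of ``collision'' rows cannot move the median and the argument closes.
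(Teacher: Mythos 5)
Your proposal is correct and matches the paper's intended argument: the paper itself offers no separate proof, only the preceding paragraph noting that the three ingredients of Theorem~\ref{thm:main}'s proof (row independence, per-row symmetry, and per-row anti-concentration coming from constant probability of no collision together with the tail-noise bound) still hold for TIM and FIM. You simply verify those ingredients explicitly — the sign $s(i)$ factoring out of $\sigma$, the no-collision probability being $\Omega(1)$ under both collision mechanisms (using $\abs{S'}\le 2k$, $C=ck$, $p_{\text{ghost}}\le 1$, and $\mathfrak p(1/C)=1/(C-1)$), and the anti-concentration of the uniform-with-constant-probability $V_{h(i)}$ in place of Corollary~\ref{cor:gen} — and then re-run Lemma~\ref{lemma:medians} exactly as in Theorem~\ref{thm:main}. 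This is precisely what the paper means by ``with the same proof.''
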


\define{cor:timfimvariance}{Corollary}{Suppose $M = \Theta(\mu)$.  Then

\centering$\renewcommand*{\arraystretch}{2}\begin{array}{lclc}
\E[((\wh{x}_{\text{tim}})_i - x_i)^2] \lesssim \dfrac{\mu^2}{R} & , & \E[((\wh{x}_{\text{tim}})_i - x_i)^4] \lesssim \dfrac{\mu^4}{R^2} & , \\
\E[((\wh{x}_{\text{fim}})_i - x_i)^2] \lesssim \dfrac{\mu^2}{R} & \text{, and} & \E[((\wh{x}_{\text{fim}})_i - x_i)^4] \lesssim \dfrac{\mu^4}{R^2} & .
\end{array}$}
\declare{cor:timfimvariance}

We can recover TIM Count-Sketch from FIM Count-Sketch by
resampling some of the rows.  More specifically, we
take any row in which some ghost or some element of $S'$
collides with both $i_1$ and $i_2$ and resample that
row, conditioning on $i_1$ and $i_2$ colliding with
each other.  The errors for both $i_1$ and $i_2$ in
any such row are $M$ in magnitude both before and
after resampling.  Moreover, for a fixed coordinate
$i \in \{i_1,i_2\}$ the signs of the errors before and
after resampling are independent and uniform.  (Note:
the signs of the errors for $i_1$ and $i_2$ after resampling
are not necessarily independent of each other.  We assume nothing about
their dependence in our argument.)

Let $\wh{x}_{\text{fim}}$ be the estimate computed by
FIM Count-Sketch and let $\wh{x}_{\text{tim}}$ be the
result of TIM Count-Sketch, recovered from FIM
Count-Sketch as above.  Let $\widetilde R$ be the number of
rows that have to be resampled to recover TIM Count-Sketch from FIM Count-Sketch.
Focusing for the moment on $i_1$, define the errors
\[E_{\text{fim}} = (\wh{x}_{\text{fim}} - x)_{i_1}\ \ \ \text{and}\ \ \ E_{\text{tim}} = (\wh{x}_{\text{tim}} - x)_{i_1}.\]
We expect $\widetilde R$ to be reasonably small, and thus we expect the change in moving from
FIM to TIM to be small.  In particular, we expect $E_{\text{fim}}$ and $E_{\text{tim}}$
to be close.  More specifically, we shall prove the following proposition.

\define{prp:fimtim}{Proposition}{Suppose $M = \Theta(\mu)$.
Then $\displaystyle \E[(E_{\text{tim}}^2 - E_{\text{fim}}^2)^2] \lesssim \frac{M^4}{R^2 k^{1/7}}$.}
\declare{prp:fimtim}

Since we just want to bound the expected value of a bounded random variable, small
probability events can be ignored.  Indeed, in general, if $X$ is a
random variable bounded by $B$ and $\mathcal{E}$ is an event with probability $p$, then
\[\E[X] = p \E[X \mid \mathcal{E}] + (1 - p) \E[X \mid \overline{\mathcal{E}}] \le p B + \E[X \mid \overline{\mathcal{E}}].\]
Thus if we aim to prove $\E[X] \lesssim Y$ and we know $p B \lesssim Y$, then it suffices to prove
$\E[X \mid \overline{\mathcal{E}}] \lesssim Y$.  Specializing to our case, because
$\abs{E_{\text{fim}}},\abs{E_{\text{tim}}} \le M$ always holds, this demonstrates
that in order to prove Proposition~\ref{prp:fimtim}, we may condition away
from events that occur with probability $O(R^{-2} k^{-1/7})$.  We will refer to such
events as ``ignorable" and, as the name indicates, freely ignore them.

\begin{lemma} \label{lma:fimtimlargek}
If $k \ge R^{7/2}$ then Proposition~\ref{prp:fimtim} holds.
\end{lemma}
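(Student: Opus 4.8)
The plan is to exploit the fact that TIM Count-Sketch is recovered from FIM Count-Sketch by resampling \emph{only} the rows in which some element of $S'\setminus\{i_1,i_2\}$ or some ghost collides with both $i_1$ and $i_2$; let $\widetilde R$ denote the (random) number of such rows. First I would observe that in any row that is \emph{not} resampled, the collision data relevant to $i_1$ --- which elements of $S'$ and which ghosts collide with $i_1$, together with the shared tail variables $T_j,V_j$ and the random signs --- is literally identical in the two models, and in particular $i_1$ and $i_2$ do not collide there in either model. Hence the row error for $i_1$ is the same function of the common randomness in FIM and TIM for every non-resampled row, so if $\widetilde R=0$ the two vectors of row errors coincide, their medians coincide, and $E_{\text{tim}}=E_{\text{fim}}$; in that case $E_{\text{tim}}^2-E_{\text{fim}}^2=0$.

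Second, since $\abs{E_{\text{tim}}},\abs{E_{\text{fim}}}\le M$ always, we have $(E_{\text{tim}}^2-E_{\text{fim}}^2)^2\le M^4$ always, and combining with the previous observation $(E_{\text{tim}}^2-E_{\text{fim}}^2)^2\le M^4\cdot\mathbf{1}[\widetilde R\ge 1]$. Taking expectations and applying Markov's inequality to the nonnegative integer $\widetilde R$ gives $\E[(E_{\text{tim}}^2-E_{\text{fim}}^2)^2]\le M^4\Pr[\widetilde R\ge 1]\le M^4\,\E[\widetilde R]$. To bound $\E[\widetilde R]$ I would union bound over the $C-1$ relevant objects (the $\abs{S'}-2$ coordinates of $S'\setminus\{i_1,i_2\}$ and the $C+1-\abs{S'}$ ghosts): each independently collides with $i_1$ with probability $\mathfrak{p}(1/C)=1/(C-1)$ or, for a ghost, $\mathfrak{p}(p_{\text{ghost}}/C)\le 1/(C-1)$, and likewise for $i_2$, so it double-collides with probability at most $1/(C-1)^2$. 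Summing, a fixed row is resampled with probability at most $(C-1)/(C-1)^2=1/(C-1)\lesssim 1/k$, hence $\E[\widetilde R]\lesssim R/k$ and $\E[(E_{\text{tim}}^2-E_{\text{fim}}^2)^2]\lesssim M^4 R/k$.

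Finally, the hypothesis $k\ge R^{7/2}$ gives $R\le k^{2/7}$, so $R^3k^{1/7}\le k^{6/7}\cdot k^{1/7}=k$, i.e.\ $R/k\le 1/(R^2k^{1/7})$; plugging this in yields $\E[(E_{\text{tim}}^2-E_{\text{fim}}^2)^2]\lesssim M^4/(R^2k^{1/7})$, which is precisely Proposition~\ref{prp:fimtim}. The argument is little more than a union bound plus arithmetic, and the only step needing genuine care is the coupling bookkeeping in the first paragraph --- checking that non-resampled rows are truly identical in the two models so that the entire difference is carried by the rare event $\{\widetilde R\ge 1\}$. This is also why the large-$k$ regime is the easy one: the crude bound $(E_{\text{tim}}^2-E_{\text{fim}}^2)^2\le M^4$ already suffices, and neither the refined moment estimates of Corollary~\ref{cor:timfimvariance} nor any ``ignorable'' conditioning is required here.
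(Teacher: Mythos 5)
Your proof is correct and takes essentially the same approach as the paper: both exploit the coupling in which TIM is obtained from FIM by resampling, observe that $\widetilde R = 0$ forces $E_{\text{tim}} = E_{\text{fim}}$, bound the resampling probability per row by $\lesssim 1/k$, and use the crude $M^4$ bound together with $k \ge R^{7/2}$. The only cosmetic difference is that you bound $\Pr[\widetilde R \ge 1] \le \E[\widetilde R]$ by a union bound and Markov's inequality, while the paper computes $\Pr[\widetilde R = 0] = (1-1/C)^R$ directly and phrases the conclusion in its ``ignorable events'' language; these are interchangeable.
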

\begin{proof}
The probability that a single row is not resampled is $p_{\text{nc}} = 1 - 1/C$, and so
\[\Pr[\widetilde R = 0] = \left ( 1 - \frac{1}{C} \right )^R = \exp(R \log (1 - 1/C)) = \exp(-R \Theta(1/C)) = 1 - \Theta \left ( \frac{R}{C} \right ).\]
Because $C = \Omega(k)$ and $k \ge R^{7/2}$, $R/C \lesssim R/k \lesssim R^{-2} k^{-1/7}$.
Thus $\widetilde R > 0$ is an ignorable event.  But of course if $\widetilde R = 0$ then
$\wh{x}_{\text{fim}} = \wh{x}_{\text{tim}}$.  This gives the desired bound.
\end{proof}

\begin{observation} \label{obs:whatisignorable}
Let $a,b,c > 0$ be arbitrary.  Any event that occurs with probability
$\exp(-\Omega(R^b k^c))$ is ignorable.  Moreover, if we suppose that
$k \le R^a$, then any event that occurs with probability $\exp(-\Omega(R^b))$
is ignorable.
\end{observation}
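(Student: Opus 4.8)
The plan is to trace the claim back to the elementary fact that any fixed positive power of a variable eventually dominates a constant multiple of its logarithm. Recall that, in the present context, an event is \emph{ignorable} precisely when its probability is $O(R^{-2}k^{-1/7})$, and that we work under $k\ge 2$ and $R\ge 1$, so that $R^{b}\ge 1$ and $k^{c}\ge 1$ for the given constants $b,c>0$. The single auxiliary fact I would isolate first is this: for any fixed $\beta,\delta>0$, an event of probability $\exp(-\Omega(R^{\beta}))$ has probability $O(R^{-\delta})$, and likewise with $k$ in place of $R$. Indeed, writing the probability as $\le\exp(-\gamma R^{\beta})$ for a fixed $\gamma>0$, we have $\exp(-\gamma R^{\beta})\le R^{-\delta}$ as soon as $\gamma R^{\beta}\ge \delta\log R$, which holds for all $R$ past a threshold depending only on $\beta,\delta,\gamma$; for $R$ below that threshold the quantity $R^{-\delta}$ is bounded below by a positive constant and hence dominates any probability. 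All hidden constants here depend only on $\beta,\delta,\gamma$, which is exactly why we may treat $a,b,c$ as fixed throughout.

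For the first assertion, write the probability of the event as $\le\exp(-\gamma R^{b}k^{c})$ for a fixed $\gamma>0$. Since $R^{b},k^{c}\ge 1$, one has $R^{b}k^{c}\ge\tfrac12 R^{b}+\tfrac12 k^{c}$, so the probability is at most $\exp(-\tfrac{\gamma}{2}R^{b})\cdot\exp(-\tfrac{\gamma}{2}k^{c})$. Applying the auxiliary fact to the first factor with $\beta=b,\delta=2$ and to the second factor (its $k$-version) with $\beta=c,\delta=1/7$, this is $O(R^{-2})\cdot O(k^{-1/7})=O(R^{-2}k^{-1/7})$, i.e.\ the event is ignorable.

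For the second assertion, assume $k\le R^{a}$ and write the probability as $\le\exp(-\gamma R^{b})$. Applying the auxiliary fact with $\beta=b$ and $\delta=2+a/7$ gives that the probability is $O(R^{-2-a/7})=O(R^{-2})\cdot R^{-a/7}$, and since $R^{-a/7}=(R^{a})^{-1/7}\le k^{-1/7}$ by the hypothesis $k\le R^{a}$, the probability is $O(R^{-2}k^{-1/7})$, as required. I do not expect any genuine obstacle: the content of the statement is entirely ``polynomial beats logarithm,'' and the only thing requiring care is the routine bookkeeping that whenever one of $R,k$ is bounded (which, given $k\le R^{a}$ or $R\gtrsim\log k$, forces the other to be bounded as well) the target $R^{-2}k^{-1/7}$ is itself bounded below by a constant and so absorbs the exponentially small probability trivially.
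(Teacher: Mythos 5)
Your proof is correct, and since the paper states Observation~\ref{obs:whatisignorable} without proof, your argument is precisely the intended elementary one: split $R^bk^c\ge\tfrac12R^b+\tfrac12k^c$ (using $R^b,k^c\ge1$), apply the fact that exponentials of positive powers dominate any fixed polynomial to each factor separately, and for the second assertion trade $R^{-a/7}\le k^{-1/7}$ via $k\le R^a$. The bookkeeping that the implied constants in each factor depend only on the fixed exponents (not on the other variable) is exactly the point that needs to be, and is, made.
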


\begin{lemma} \label{lma:fimtimsmallk1}
Suppose $k \le R^{7/2}$.  Then, except for ignorable events, $\widetilde R < R/k^{1/5}$.
\end{lemma}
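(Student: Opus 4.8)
The plan is to recognize $\widetilde R$ as a binomial random variable and apply a Chernoff bound. Recall that $p_{\text{ghost}}$ was fixed precisely so that $p_{\text{nc}} = 1 - 1/C$; equivalently, in each row the event that one must resample to recover TIM Count-Sketch from FIM Count-Sketch (i.e., that some ghost or some element of $S' \setminus \{i_1,i_2\}$ collides with both $i_1$ and $i_2$) has probability $1/C$. Since the rows use independent randomness, these events are independent across the $R$ rows, so $\widetilde R \sim \mathrm{Binomial}(R, 1/C)$ with $\E[\widetilde R] = R/C = \Theta(R/k)$, using $C = ck$.

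Next I would apply the multiplicative Chernoff bound in its large-deviation form: if $X$ is a sum of independent indicators with mean $\mu$ and $a \ge \mu$, then $\Pr[X \ge a] \le (e\mu/a)^a$. Taking $a = R/k^{1/5}$ (so $a/\mu = ck^{4/5} \ge 1$), this gives
\[
\Pr\!\left[\widetilde R \ge \frac{R}{k^{1/5}}\right] \;\le\; \left(\frac{e\, k^{1/5}}{C}\right)^{R/k^{1/5}} \;=\; \left(\frac{e}{c\, k^{4/5}}\right)^{R/k^{1/5}} \;\le\; \exp\!\left(-\frac{R}{k^{1/5}}\right),
\]
where the final step uses that $c$ is a sufficiently large constant and $k \ge 2$, so that $c\,k^{4/5} \ge e^2$ and hence $e/(c\,k^{4/5}) \le 1/e$.

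Then I would check that this tail probability is ignorable. By hypothesis $k \le R^{7/2}$, so $k^{1/5} \le R^{7/10}$ and therefore $R/k^{1/5} \ge R^{3/10}$, whence $\Pr[\widetilde R \ge R/k^{1/5}] \le \exp(-R^{3/10})$. Applying Observation~\ref{obs:whatisignorable} with the hypothesis $k \le R^{7/2}$ playing the role of $k \le R^a$ and with $b = 3/10$, any event of probability $\exp(-\Omega(R^{3/10}))$ is ignorable; in particular the event $\widetilde R \ge R/k^{1/5}$ is ignorable. Thus, away from ignorable events, $\widetilde R < R/k^{1/5}$, as claimed.

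I do not anticipate a real obstacle: the computation is a routine Chernoff estimate, and the only delicate point is the final bookkeeping — verifying that the exponent $R/k^{1/5}$ produced by Chernoff is bounded below by a fixed positive power of $R$. That is exactly where the regime $k \le R^{7/2}$ (which separates this lemma from the complementary case handled in Lemma~\ref{lma:fimtimlargek}) is used, and it is also why we target the slack bound $R/k^{1/5}$ rather than a sharper constant multiple of the true mean $\Theta(R/k)$.
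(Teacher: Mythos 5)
Your proof is correct and takes essentially the same approach as the paper: both recognize $\widetilde R$ as a $\mathrm{Binomial}(R, 1/C)$ random variable (since rows are resampled independently with probability $1/C$), apply a Chernoff tail bound to conclude $\Pr[\widetilde R \ge R/k^{1/5}] \le \exp(-\Omega(R/k^{1/5})) = \exp(-\Omega(R^{3/10}))$, and invoke Observation~\ref{obs:whatisignorable} with the hypothesis $k \le R^{7/2}$. The only cosmetic difference is the choice of Chernoff form (you use $(e\mu/a)^a$; the paper uses $\exp(-\E[X]\epsilon^2/(2+\epsilon))$ with $\epsilon = C/k^{1/5}$), which yields the same asymptotic bound.
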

\begin{proof}
Each row is resampled with probability $1/C$, and the resampling events for the $R$ rows
are independent.  Thus, by the Chernoff bound $\Pr[X \ge (1+\epsilon)\E[X]] \le \exp(-\E[X]\epsilon^2/(2+\epsilon))$
(applied with $\epsilon = C/k^{1/5}$, which in particular can be taken to be arbitrarily large),
\[\Pr \left[ \widetilde R \ge \frac{R}{k^{1/5}} \right] \le \exp(-\Omega(R/k^{1/5})) = \exp(-\Omega(R^{3/10})).\]
This is ignorable by Observation~\ref{obs:whatisignorable}.
\end{proof}

Let $\widetilde R_{+},\widetilde R_{0},\widetilde R_{-}$ be the number of rows resampled
in which the error increases from $-M$ to $M$, stays the same, and decreases with $M$ to $-M$,
respectively.  Thus $\widetilde R_{+} + \widetilde R_0 + \widetilde R_{-} = \widetilde R$.
The net effect of resampling is measured by $\widehat R := \widetilde R_{+} - \widetilde R_{-}$.

\begin{lemma} \label{lma:binomial}
Fix a positive integer $r$.  If
$X$ is an $(N,p)$ binomial random variable with $Np \ge 1$, then
$\E[(X-Np)^r] \lesssim (Np)^{r/2}$ (where the implied constant depends on $r$).
\end{lemma}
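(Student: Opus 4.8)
The plan is to reduce to an i.i.d.\ sum. Write $X - Np = \sum_{i=1}^N Z_i$, where $Z_i := Y_i - p$ for i.i.d.\ Bernoulli$(p)$ variables $Y_i$. These satisfy $\E[Z_i] = 0$, $\abs{Z_i} \le 1$, and $\E[Z_i^2] = p(1-p) \le p$. The one elementary observation that drives everything: since $\abs{Z_i} \le 1$, we have $\abs{Z_i}^m \le Z_i^2$ for every integer $m \ge 2$, hence $\abs{\E[Z_i^m]} \le \E[Z_i^2] \le p$ for all $m \ge 2$.

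First I would expand the $r$-th power into a sum over multi-indices,
\[
\E[(X-Np)^r] = \sum_{(i_1,\dots,i_r) \in [N]^r} \E[Z_{i_1} \cdots Z_{i_r}].
\]
By independence and $\E[Z_i] = 0$, a term vanishes unless every index value occurring among $i_1,\dots,i_r$ occurs at least twice; in particular a nonvanishing term involves at most $\floor{r/2}$ distinct index values. Grouping the surviving terms by their ``repetition pattern'' (a set partition of $[r]$ into blocks of size $\ge 2$), the number of patterns is a constant $c_r$ depending only on $r$, and for a pattern with $s \le \floor{r/2}$ blocks the number of ways to assign distinct index values from $[N]$ is at most $N^s$. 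For such a term the expectation factors as $\prod_j \E[Z^{m_j}]$ over the blocks, each of size $m_j \ge 2$, so by the observation above its absolute value is at most $p^s$.

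Putting this together,
\[
\abs{\E[(X-Np)^r]} \le c_r \sum_{s=1}^{\floor{r/2}} (Np)^s \lesssim (Np)^{\floor{r/2}} \le (Np)^{r/2},
\]
where the middle step uses $Np \ge 1$ so the geometric-type sum is dominated (up to the factor $\floor{r/2} \le r/2$) by its largest term, and the last step uses $Np \ge 1$ together with $\floor{r/2} \le r/2$. Note this handles odd and even $r$ uniformly, since we have bounded the absolute value directly.

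The argument is essentially bookkeeping, so there is no serious obstacle; the only points needing a little care are (i) the combinatorial count in the expansion—checking that the number of repetition patterns, and hence the implied constant, depends only on $r$ and not on $N$ or $p$—and (ii) the clean reduction $\abs{\E[Z_i^m]} \le p$ for $m \ge 2$, which is what lets every surviving block contribute exactly one factor of $p$. One could alternatively invoke Rosenthal's inequality for sums of independent mean-zero bounded variables to get the conclusion in a line, but the self-contained expansion above matches the style of the paper.
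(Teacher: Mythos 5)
Your proof is correct and takes essentially the same route as the paper: both decompose $X - Np$ as a sum of i.i.d.\ centered Bernoullis, expand the $r$-th moment, discard terms with a singleton index by mean-zero, bound each block's moment by $p$, count the $O(N^s)$ index choices per pattern, and use $Np \ge 1$ to let the top term dominate. The only cosmetic difference is that you bound $\abs{\E[(X-Np)^r]}$ directly (which handles odd $r$ a bit more transparently), whereas the paper bounds $\E[Y^e]$ without absolute values.
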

\begin{proof}
Let $Y$ be a random variable with
$\Pr[Y=1-p] = p$ and $\Pr[Y=-p]=1-p$, and let
$X-Np = Y_1+\cdots+Y_N$ where
$Y_1,\dots,Y_N$ are i.i.d.\ with the same distribution as $Y$.  Then
\[\E[(X-Np)^r] = \E[(Y_1+\cdots+Y_N)^r] = \sum_{m=1}^{\min \{r,N\}} \sum_{\substack{e_1 \ge \cdots \ge e_m \ge 1\\\sum e_i = r}} C_{e_1,\dots,e_m} \E[Y^{e_1}]\cdots \E[Y^{e_m}],\]
where the coefficient $C_{e_1,\dots,e_m}$ counts the number of ways to choose
an $r$-tuple of elements of an $N$-element set such that the most common element
occurs $e_1$ times, the next most common element occurs $e_2$ times, and so on.

We have $\E[Y] = 0$, so the terms with $e_i = 1$ (for any $i$) all vanish.  Since $\sum_{i=1}^m e_i = r$,
this leaves just the terms with $m \le r/2$.  Now for any $e \ge 2$, we have
\[\E[Y^e] = p(1-p)^e + (1-p)(-p)^e \le p(1-p)^2 + (1-p)p^2 = p(1-p) \le p,\]
and so
\[\E[(X-Np)^r] \le \sum_{m=1}^{\min \{r/2,N\}} \sum_{\substack{e_1 \ge \cdots \ge e_m \ge 2\\\sum e_i = r}} C_{e_1,\dots,e_m} p^m.\]

Now we can compute $C_{e_1,\dots,e_m}$ by first choosing which $m$ of the $N$ elements occur
and then choosing how to arrange them.  The number of choices for the former is clearly
bounded by $N^m$ and the number of choices for the latter is bounded by a function of $r$ only.
Thus $C_{e_1,\dots,e_m} \lesssim N^m$.  Moreover, the number of terms in the
second summation is bounded by a function of $r$ only.  This leaves us with
\[\E[(X-Np)^r] \lesssim \sum_{m=1}^{\min \{r/2,N\}} N^m p^m.\]
Given that $Np \ge 1$, the last term in the summation is dominant, giving
the desired bound.
\end{proof}

\begin{lemma} \label{lma:fimtimsmallk2}
Suppose $k \le R^{7/2}$.  Conditioning away from ignorable events, $\E[\widehat{R}^4] \lesssim R^2/k^{2/5}$.
\end{lemma}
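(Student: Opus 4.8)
The plan is to combine the deterministic bound $\widetilde R < R/k^{1/5}$ supplied by Lemma~\ref{lma:fimtimsmallk1} (valid outside ignorable events when $k \le R^{7/2}$) with the elementary fourth-moment estimate for a sum of independent, mean-zero, bounded random variables. First I would condition on the \emph{resampling pattern}, i.e., on which of the $R$ rows get resampled when recovering TIM Count-Sketch from FIM Count-Sketch. Since distinct rows use independent randomness, once the resampling pattern is fixed the per-row contributions to $\widehat R = \widetilde R_+ - \widetilde R_-$ are independent. For one resampled row, the error for $i_1$ equals $\pm M$ both before and after resampling, and --- as noted when TIM Count-Sketch was recovered from FIM Count-Sketch --- the ``before'' and ``after'' signs are independent and uniform; hence that row contributes $+1$ to $\widehat R$ (via $\widetilde R_+$) with probability $1/4$, contributes $-1$ (via $\widetilde R_-$) with probability $1/4$, and contributes $0$ otherwise. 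Writing $Z$ for this contribution, we have $\E[Z] = 0$ and $\E[Z^2] = \E[Z^4] = \tfrac12$, and conditioned on a resampling pattern with $r$ resampled rows $\widehat R$ is a sum of $r$ i.i.d.\ copies of $Z$; the standard identity for the fourth moment of such a sum gives
\[
\E[\widehat R^4 \mid \widetilde R = r] = r\,\E[Z^4] + 3r(r-1)\bigl(\E[Z^2]\bigr)^2 \le r^2 .
\]

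Then I would average over the resampling pattern. By Lemma~\ref{lma:fimtimsmallk1}, conditioning away from ignorable events lets us assume $\widetilde R < R/k^{1/5}$, and this event is a function of the resampling pattern alone, so it does not disturb the conditional moment displayed above. Therefore
\[
\E[\widehat R^4] \le \E[\widetilde R^2] < \left(\frac{R}{k^{1/5}}\right)^{2} = \frac{R^2}{k^{2/5}},
\]
which is exactly the claim. (One could instead avoid the conditioning entirely by noting that $\widetilde R$ is $\mathrm{Binomial}(R, 1/C)$ with $C = \Theta(k)$, so $\E[\widetilde R^2] \lesssim R/k + R^2/k^2 \lesssim R^2/k^{2/5}$; either route works.)

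The step I expect to require the most care is the setup rather than any computation: verifying that, conditional on the resampling pattern, the per-row contributions $Z$ really are i.i.d.\ \emph{with mean zero}. Mean zero is precisely the earlier observation that the row error sign for a fixed coordinate is independent and uniform before versus after resampling, so $+1$ and $-1$ contributions are equally likely; independence across rows is immediate from independence of rows, and independence from the resampling pattern holds because the resampling decision for a row depends only on the collision structure, not on the random signs. Everything downstream --- the $\E[(\sum_{j=1}^{r} Z_j)^4] = O(r^2)$ identity and the deterministic estimate $\widetilde R < R/k^{1/5}$ --- is routine.
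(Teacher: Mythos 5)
Your proof is correct and follows essentially the same route as the paper: the paper reduces WLOG to $\widetilde R_0 = 0$ so that $\widehat R$ is the endpoint of a simple random walk of length $\widetilde R$, then invokes Lemma~\ref{lma:binomial} for the fourth-moment bound $O(\widetilde R^2)$ and combines with Lemma~\ref{lma:fimtimsmallk1}. You bypass the WLOG step by working directly with the $\{0,\pm 1\}$-valued per-row contributions and computing the fourth moment from the standard identity, which is a minor reorganization of the same argument.
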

\begin{proof}
By reducing $\widetilde R$, we may assume without loss of generality that
$\widehat R_0 = 0$.  Thus $\widehat R$ is just the value of a discrete random walk
of length $\widetilde R$.  By Lemma~\ref{lma:binomial} the fourth
moment of such a random walk is $O(\widetilde R^2)$.  Combining
this with Lemma~\ref{lma:fimtimsmallk1} gives the desired bound.
\end{proof}

\begin{lemma} \label{lma:fimtimsmallk3}
Suppose $k \le R^{7/2}$ and $M=\Theta(\mu)$.  Conditioning away from ignorable events
and then conditioning on the value of $\widehat{R}$,
$\E[(E_{\text{tim}} - E_{\text{fim}})^4] \lesssim (M \widehat{R}/R)^4$.
\end{lemma}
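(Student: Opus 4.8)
The plan is to reduce the difference $E_{\text{tim}}-E_{\text{fim}}$ to a gap between two order statistics of the FIM row errors for $i_1$, and then to estimate the fourth moment of that gap by combining a density lower bound near the median with stochastic domination by a Beta‑distributed spacing. First I would set up the reduction. Fix the coupling in which TIM Count‑Sketch is recovered from FIM Count‑Sketch by resampling rows, let $a_1,\dots,a_R$ be the FIM row errors for $i_1$, and let $Q_1\le\cdots\le Q_R$ be their order statistics, extended by $Q_\ell:=-M$ for $\ell\le 0$ and $Q_\ell:=M$ for $\ell>R$; put $m=(R+1)/2$. Every resampled row has some element of $S'$ or some ghost colliding with $i_1$, so the FIM error (and likewise the TIM error) for $i_1$ in such a row has magnitude exactly $M$. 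Hence the passage from FIM to TIM leaves every row error of magnitude $<M$ unchanged and only alters some values $\pm M$; counting these, the number of row errors equal to $-M$ changes by exactly $-\widehat{R}=\widetilde{R}_--\widetilde{R}_+$. Reading the medians off the sorted lists, the $\ell$‑th order statistic of the TIM row errors equals $Q_{\ell+\widehat{R}}$, so $E_{\text{fim}}=Q_m$ and $E_{\text{tim}}=Q_{m+\widehat{R}}$, and therefore, pointwise,
\[
|E_{\text{tim}}-E_{\text{fim}}|\ \le\ Q_{m+|\widehat{R}|}-Q_{m-|\widehat{R}|}.
\]
It thus suffices to show $\E[(Q_{m+r}-Q_{m-r})^4\mid\widehat{R}]\lesssim(Mr/R)^4$ with $r=|\widehat{R}|$; the case $r=0$ is trivial, since then $E_{\text{tim}}=E_{\text{fim}}$.

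Next I would record two preparatory facts. Because $M=\Theta(\mu)$ and the tail variable $V$ of Lemma~\ref{lma:maketailbigger} is uniform on $[-\mu,\mu]$ with constant probability, there is a constant $\delta_0=\Theta(\mu)$ with $\delta_0<M$ such that the law of the FIM error for $i_1$ in an un‑resampled row has no atom in $(-\delta_0,\delta_0)$ and density at least $\phi=\Theta(1/\mu)$ there — this coming from the event that $i_1$ has no collision and $|V|\le\mu$, in which the error is $\pm\min\{M,|V|\}$ with a uniform sign; the density bound survives conditioning a row to be un‑resampled (which only raises the probability of no collision) and conditioning on $\widehat{R}$ (which only relabels the $\pm M$ signs in resampled rows). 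Second, by Lemma~\ref{lma:fimtimsmallk1}, away from ignorable events $\widetilde{R}<R/k^{1/5}$, so the number $R_0=R-\widetilde{R}$ of un‑resampled rows is at least $R/2$ and $r\le\widetilde{R}\ll R$.

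Then I would bound the gap. Condition on the set $\mathcal{S}$ of resampled rows and on $\widehat{R}$. The FIM errors from the un‑resampled rows are i.i.d.\ with a common symmetric c.d.f.\ $F_0$ of density $\ge\phi$ on $(-\delta_0,\delta_0)$; write their order statistics $Q^{(0)}_1\le\cdots\le Q^{(0)}_{R_0}$ and couple $Q^{(0)}_j=F_0^{-1}(U_{(j)})$ with uniform order statistics $U_{(1)}\le\cdots\le U_{(R_0)}$. Since the resampled rows contribute only values $\pm M$, every $Q_\ell$ with $Q_\ell\in(-M,M)$ equals $Q^{(0)}_{\ell-t_-}$, where $t_-$ is the number of resampled rows whose FIM error for $i_1$ is $-M$; and, given $\mathcal{S}$ and $\widehat{R}$, a short computation shows $\E[t_-]=\tfrac12(\widetilde{R}+\widehat{R})$ with $O(\sqrt{\widetilde{R}})$ fluctuations. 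Hence, away from ignorable events, the index $m\pm r-t_-$ lies within $o(R_0)$ of $R_0/2$, so $m\pm r$ are ``middle'' ranks and $U_{(m\pm r-t_-)}$ falls in $(F_0(-\delta_0),F_0(\delta_0))$, the range on which $F_0^{-1}$ is $1/\phi$‑Lipschitz. On this event
\[
Q_{m+r}-Q_{m-r}=Q^{(0)}_{m+r-t_-}-Q^{(0)}_{m-r-t_-}\ \le\ \tfrac1\phi\bigl(U_{(m+r-t_-)}-U_{(m-r-t_-)}\bigr),
\]
and the uniform spacing $U_{(j+2r)}-U_{(j)}$ is $\mathrm{Beta}(2r,R_0+1-2r)$ for any valid $j$ — in particular, given $\mathcal{S}$, independent of $t_-$ and of $\widehat{R}$ — so its fourth moment is $\lesssim(r/R_0)^4\lesssim(r/R)^4$. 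This gives $\lesssim\phi^{-4}(r/R)^4=\Theta((Mr/R)^4)$ for the contribution of this event to $\E[(Q_{m+r}-Q_{m-r})^4\mid\mathcal{S},\widehat{R}]$; averaging over $\mathcal{S}$ and adding the ignorable contribution — at most $(2M)^4$ times an $\exp(-\Omega(R))$ probability, which is ignorable since $k\le R^{7/2}$ by Observation~\ref{obs:whatisignorable} and in any case $\lesssim(M/R)^4\le(Mr/R)^4$ — finishes the proof.

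The delicate point is the step conditioning on $\widehat{R}$: one must check it does not perturb the law of the median‑adjacent order statistics. This works precisely because those order statistics are read off the un‑resampled rows, whose errors are independent of the resampling signs that determine $\widehat{R}$, while $\widehat{R}$ enters only through the index shift $t_-$ — harmless, since the Beta law of a spacing is independent of where the spacing sits. With that in hand, the structural reduction and the Beta‑spacing estimate are routine.
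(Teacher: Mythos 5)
Your argument is correct and takes a genuinely different route from the paper's, so a comparison is worthwhile. Both proofs rest on the same two pillars: (i) passing from FIM to TIM shifts the median's rank by exactly $\widehat{R}$ in the sorted list of row errors for $i_1$, and (ii) in an un-resampled row the FIM error for $i_1$ has density $\Theta(1/\mu)$ in a $\Theta(\mu)$-neighborhood of the origin. From there the paper works with the tail: for each $t$ it interprets $\{E_{\text{tim}}-E_{\text{fim}}>tM\widehat{R}/R\}$ as the event that a binomial with mean $\Theta(t\widehat{R})$ falls below $\widehat{R}$, bounds this probability by $t^{-5}$ via a tenth-moment Markov inequality (Lemma~\ref{lma:binomial}), and integrates by parts to recover the fourth moment. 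You instead make the order-statistics picture explicit, couple the un-resampled errors to uniform order statistics through $F_0^{-1}$, note that $F_0^{-1}$ is $1/\phi$-Lipschitz near the median, and read off the fourth moment of $Q_{m+r}-Q_{m-r}$ directly from the $\mathrm{Beta}(2r,\,R_0+1-2r)$ law of a uniform $2r$-spacing. Your route produces the fourth moment in one step and avoids the somewhat arbitrary choice of tenth moment plus tail integration, at the cost of invoking the Beta law and carefully tracking the random index offset $t_-$; the paper's route needs only Chernoff and polynomial binomial moments. One small point to tighten: you invoke Lemma~\ref{lma:fimtimsmallk1} to get $\widetilde{R}<R/k^{1/5}$ and from this deduce $R_0\ge R/2$ and that $m\pm r-t_-$ are middle ranks of the $R_0$ uniforms; these inferences only hold once $k$ exceeds a moderate constant. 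For the statement to hold uniformly you should instead observe directly that $\E[\widetilde{R}]=R/C=R/(ck)$, so Chernoff gives $\widetilde{R}<R/10$ except with probability $e^{-\Omega(R)}$, which is ignorable under $k\le R^{7/2}$ by Observation~\ref{obs:whatisignorable}. This is a cosmetic repair (the paper's own proof has the same implicit assumption), not a defect of the method.
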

\begin{proof*}
First, suppose $\widehat{R}=0$.  In this case $E_{\text{fim}}=E_{\text{tim}}$,
so the desired bound holds because both sides are $0$.  Thus we may suppose $\widehat{R} \ne 0$;
moreover, by symmetry we may assume $\widehat{R} > 0$.  In this case we have
$E_{\text{tim}} \in [E_{\text{fim}},M]$.

By Lemma~\ref{lma:fimbound} we see that
$\Pr[|E_{\text{fim}}| > M/2] < e^{-\Omega(R)}$; thus,
by Observation~\ref{obs:whatisignorable}, the corresponding
event is ignorable.  For the remainder of the proof, assume that it does not happen.

Condition for a moment on both the value of $E_{\text{fim}}$
and on the set of $R/2$ rows in which the error for FIM Count-Sketch
is above the median.  Pick one such row $r$ and consider its error $E_r$.
Before the conditioning, using the assumption $M = \Theta(\mu)$,
the distribution for $E_r$ consisted of atoms at $\pm M$ and
$\Omega(1)$ probability of being uniform in $[-M,M]$.  The net effect
of our conditioning is to simply condition on $E_r \ge E_{\text{fim}}$.
(By conditioning first on $r$ having above-median error, we removed
the nontrivial dependence between $E_{\text{fim}}$ and $E_r$.)
In particular, with $\Omega(1)$ probability, $E_r \in [E_{\text{fim}},M)$.
(And, when this occurs, $E_r$ is uniform in that interval.)
Applying a Chernoff bound to the $R/2$
such rows, we see that, with probability $1-e^{-\Omega(R)}$, there are
$\Omega(R)$ rows in which the error lies in $[E_{\text{fim}},M)$.
Since the failure probability is ignorable, we henceforth
assume that this holds.

(Note that the ignorable events we just conditioned away 
influence $\widehat{R}$.  We conditioned on them first to remove that dependence.)

Fix $t > 0$ and consider the event $\mathcal{E}_t := \{E_{\text{tim}} > E_{\text{fim}} + t M \widehat{R}/R\}$.
Since the difference between TIM and FIM Count-Sketch is just replacing $\widehat{R}$
rows with error $-M$ by rows with error $+M$, this event is equivalent to FIM Count-Sketch
having fewer than $\widehat{R}$ rows in which the error is in the interval
$I = [E_{\text{fim}},E_{\text{fim}}+tM\widehat{R}/R]$.
Given the above discussion, there are $\Omega(R)$ rows in which the probability of
the error lying in that interval is $\Omega(t \widehat{R}/R)$.  More explicitly, for some
constants $a$ and $b$, there are $N := aR$ rows in which the probability of the error lying
in $I$ is (at least) $p := bt \widehat{R}/R$.  Let $X$ be the total number of such rows.
Then $X$ is a $(N,p)$ binomial random variable.  For $t > t_0 := 1/(ab)$
(i.e., for all $t$ larger than a sufficiently large constant), we have $abt > 1$.
In particular this implies $Np > \widehat{R} \ge 1$, whence the $10$th moment of
$X$ is $O((Np)^5)$ by Lemma~\ref{lma:binomial}.  It also implies that
$\widehat{R} < \E[X] = abt \widehat{R}$.
Thus, by Markov's inequality, \`a la Chebyshev's inequality,
\[\Pr[\mathcal{E}_t] = \Pr[X < \widehat{R}] \le \frac{\E[(X-abt\widehat{R})^{10}]}{\widehat{R}^{10}(abt-1)^{10}} \lesssim \frac{(abt \widehat{R})^5}{\widehat{R}^{10}(abt-1)^{10}} \le \frac{(abt)^5}{(abt - 1)^{10}}.\]
For, say, $t \ge 2t_0$, this is asymptotically $\lesssim t^{-5}$.

Using integration by parts,
\[\E[(E_{\text{tim}} - E_{\text{fim}})^4] = \left ( \frac{M \widehat{R}}{R} \right )^4 \cdot \int_{0}^\infty 4t^3 \cdot \Pr[\mathcal{E}_t]\ \text{d}t.\]
Writing
\[\int_{0}^\infty 4t^3 \cdot \Pr[\mathcal{E}_t]\ \text{d}t = \int_{0}^{2t_0} 4t^3 \cdot \Pr[\mathcal{E}_t]\ \text{d}t + \int_{2t_0}^\infty 4t^3 \cdot \Pr[\mathcal{E}_t]\ \text{d}t,\]
the first term is $O(1)$ because $t_0$ is a constant.
The second is $\lesssim \int_{2t_0}^{\infty} t^{-2}\ \text{d}t$, which is
also $O(1)$.  Thus, as desired,
\[\E[(E_{\text{tim}} - E_{\text{fim}})^4] \lesssim \left ( \frac{M \widehat{R}}{R} \right )^4. \tag*{\qedsymbol}\]
\end{proof*}

We are now ready to complete the

\begin{proof}[Proof of Proposition~\ref{prp:fimtim}]
After Lemma~\ref{lma:fimtimlargek} it only remains to handle the case $k \le R^{7/2}$.
Combining Lemmas \ref{lma:fimtimsmallk2} and \ref{lma:fimtimsmallk3}, we have
$\E[(E_{\text{tim}}-E_{\text{fim}})^4] \lesssim M^4/(R^2 k^{2/5})$.  Applying
Lemma~\ref{lma:fimbound} and the union bound, we see that
$\E[(E_{\text{tim}}+E_{\text{fim}})^4] \lesssim M^4/R^2$.  Thus,
\begin{align*}
\E[(E_{\text{tim}}^2 - E_{\text{fim}}^2)^2] &= \E[(E_{\text{tim}}-E_{\text{fim}})^2(E_{\text{tim}}+E_{\text{fim}})^2] \\
&\le \sqrt{\E[(E_{\text{tim}}-E_{\text{fim}})^4]\E[(E_{\text{tim}}+E_{\text{fim}})^4]} \\
&\lesssim \frac{M^4}{R^2 k^{1/5}},
\end{align*}
which gives the desired bound a fortiori.
\end{proof}

Using these lemmata, we can finally prove the desired result.

\state{prp:sets}
\begin{proof}
Throughout we condition on the error for each coordinate in $S$
being less than $\mu$.  By Theorem~\ref{thm:main} and a union bound,
this occurs with probability $1 - |S| e^{-\Omega(R)} = 1 - k^{-\Omega(1)}$,
so this conditioning can be absorbed into the final bound.

In addition to unmodified Count-Sketch, consider running TIM Count-Sketch
with $M = \mu$.  Define the convex, coordinate-wise symmetric set
\[A = \left \{ v \in (-\mu,\mu)^{|S|} : \norm{2}{v}^2 \le t \cdot \abs{S} \cdot \frac{1}{R} 
    \cdot \frac{\norm{2}{\tail{x}{k}}^2}{k} \right \}.\]
Applying Lemma~\ref{lma:tics} with this set $A$ shows that
unmodified Count-Sketch has at least as high of a probability
of its error lying in $A$ as TIM Count-Sketch.  Thus it suffices
to prove the asserted probability bound
for the TIM estimate $\wh{x}_{\text{tim}}$.

The error we are studying is $\norm{2}{(\wh{x}_{\text{tim}} - x)_S}^2 = \sum_{i \in S} E_{\text{tim},i}^2$,
where $E_{\text{tim},i} := (\wh{x}_{\text{tim}} - x)_i$.  By Corollary~\ref{cor:timfimvariance},
$\E[E_{\text{tim},i}^2] \le a\mu^2/R$ for some constant $a>0$.  Now, by Chebyshev's inequality, we have 
\begin{align}
\nonumber \Pr \left [ \norm{2}{(\wh{x}_{\text{tim}} - x)_S}^2 > (a+1) \cdot \frac{|S|\mu^2}{R} \right ] &\le \Pr \left [ \norm{2}{(\wh{x}_{\text{tim}} - x)_S}^2 - \E[\norm{2}{(\wh{x}_{\text{tim}} - x)_S}^2] > \frac{|S|\mu^2}{R} \right ] \\
\label{eq:prpsets2} &\le \frac{\Var(\norm{2}{(\wh{x}_{\text{tim}} - x)_S}^2)}{(|S| \mu^2/R)^2}.
\end{align}
Thus we need to bound
\begin{equation} \label{eq:prpsets3}
\Var(\norm{2}{(\wh{x}_{\text{tim}} - x)_S}^2) = \sum_{i \in S} \Var(E_{\text{tim},i}^2) + \sum_{i_1 \ne i_2 \in S} \Cov(E_{\text{tim},i_1}^2,E_{\text{tim},i_2}^2).
\end{equation}
For each $i$ we have $\Var(E_{\text{tim},i}^2) \le \E[E_{\text{tim},i}^4] \lesssim \mu^4/R^2$
by Corollary~\ref{cor:timfimvariance}.  The covariance term is the harder part to control.
Fix two coordinates $i_1 \ne i_2 \in S$ and consider FIM Count-Sketch with
respect to those two coordinates.  For shorthand write $E_{\text{tim},j}$ for $E_{\text{tim},i_j}$
($j=1,2$) and define $E_{\text{fim},j} =  (\wh{x}_{\text{fim}} - x)_{i_j}$.  Then
\[\begin{split} \Cov(E_{\text{tim},1}^2,E_{\text{tim},2}^2) = & \Cov(E_{\text{fim},1}^2,E_{\text{fim},2}^2) \\ & \quad + \Cov(E_{\text{fim},1}^2,E_{\text{tim},2}^2-E_{\text{fim},2}^2) + \Cov(E_{\text{fim},2}^2,E_{\text{tim},1}^2-E_{\text{fim},1}^2) \\ & \quad + \Cov(E_{\text{tim},1}^2-E_{\text{fim},1}^2,E_{\text{tim},2}^2-E_{\text{fim},2}^2). \end{split}\]
The first term vanishes because, by construction, $(\wh{x}_{\text{fim}})_{i_1}$ and $(\wh{x}_{\text{fim}})_{i_2}$
are independent.  We shall bound the remaining terms using the Cauchy-Schwarz inequality
$\Cov(X,Y) \le \sqrt{\Var(X)\Var(Y)}$.  

By Corollary~\ref{cor:timfimvariance} we have
$\Var(E_{\text{fim},j}^2) \le \E[E_{\text{fim},j}^4] \lesssim \mu^4/R^2$.
By Proposition~\ref{prp:fimtim}, $\Var(E_{\text{tim},j}^2-E_{\text{fim},j}^2) \lesssim \mu^4/(R^2 k^{1/7})$.
Thus
\[\Cov(E_{\text{tim},1}^2,E_{\text{tim},2}^2) \lesssim 2 \cdot \frac{\mu^4}{R^2 k^{1/14}} + \frac{\mu^4}{R^2 k^{1/7}} \lesssim \frac{\mu^4}{R^2 k^{1/14}}.\]
Substituting back into (\ref{eq:prpsets3}),
\[\Var(\norm{2}{(\wh{x}_{\text{tim}} - x)_S}^2) \lesssim |S| \cdot \frac{\mu^4}{R^2} + |S|(|S|-1) \cdot \frac{\mu^4}{R^2 k^{1/14}} \lesssim |S|^{2-\tfrac{1}{14}} \cdot \frac{\mu^4}{R^2}.\]
And finally, substituting this into (\ref{eq:prpsets2}), we get the desired bound.
\end{proof}

\end{document}